\documentclass[titlepage]{article}

\usepackage[pages=all, color=black, position={current page.south}, placement=bottom, scale=1, opacity=1, vshift=5mm]{background}

\usepackage[margin=1in]{geometry} 

\usepackage{amsmath}

\usepackage{amsthm}
\usepackage{amsfonts}
\usepackage{array}
\usepackage{algorithm}
\usepackage{algorithmicx}
\usepackage{comment}

\usepackage[utf8]{inputenc}
\usepackage{hyperref}
\hypersetup{
	unicode,
	pdfauthor={Author One, Author Two, Author Three},
	pdftitle={A simple article template},
	pdfsubject={A simple article template},
	pdfkeywords={article, template, simple},
	pdfproducer={LaTeX},
	pdfcreator={pdflatex}
}

\usepackage[sort&compress,numbers,square]{natbib}
\theoremstyle{plain}
\newtheorem{theorem}{Theorem}

\newtheorem{assumption}[theorem]{Assumption}
\newtheorem{proposition}[theorem]{Proposition}

\theoremstyle{definition}
\newtheorem{definition}[theorem]{Definition}

\newcommand{\CZ}[1]{{\color{blue} (\textbf{claire:} #1)}}

\usepackage{graphicx, color}
\graphicspath{{fig/}}
\usepackage{algorithm, algpseudocode} 
\usepackage{mathrsfs} 

\title{The Impact of Cut-ins on Mixed-Autonomy Traffic Flow: Bridging Microscopic Game-Theoretic Model and Macroscopic Flow Analysis}
\author{Yu (Fred) Song$^a$}

\date{
	$^a$ Department of Civil and Architectural Engineering and Construction Management \\
    University of Wyoming \\ 
    1000 E University Ave., Laramie, WY 82070, USA \\
    \texttt{ysong@uwyo.edu}%
}

\begin{document}
\maketitle

\begin{abstract}
The transition to automated transportation introduces mixed-autonomy traffic where human drivers may strategically exploit the risk-averse behavior of Connected and Automated Vehicles (CAVs). While microscopic models capture these dyadic interactions, their aggregate impact on network-level stability remains unquantified due to the scale gap between agent-based and continuum models. This study bridges this analytical divide by integrating a game-theoretic friction term directly into the macroscopic kinematic wave framework. We model the cut-in maneuver as a Stackelberg game, identifying a distinct ``exploitation window'' where human drivers leverage CAV defensiveness to execute aggressive merges. By deriving a closed-form micro-macro bridge, we translate these discrete strategic outcomes into a continuous friction parameter that endogenously modifies the traffic conservation law. Theoretical analysis and numerical simulations confirm that this behavioral asymmetry functions as a deterministic destabilizer, generating perturbation source terms that trigger phantom jams and strictly reduce road capacity. Crucially, we reveal a convex relationship between CAV penetration and system efficiency, identifying a critical instability regime at intermediate penetration rates (approximately 45\%) where the frequency of exploitable interactions is maximized. These findings demonstrate that without socially aware control policies, the defensive nature of early-deployment CAVs may paradoxically degrade traffic flow stability.

\textit{Keywords:} mixed-autonomy traffic; game theory; macroscopic traffic flow; flow stability; behavioral friction
\end{abstract}

\section{Introduction}
\label{sec:intro}

The integration of Connected and Automated Vehicles (CAVs) into existing transportation networks promises a transformative shift in traffic efficiency and safety. However, the transition to fully automated systems will not be instantaneous. For the foreseeable future, our roads will be characterized by mixed-autonomy traffic, a hybrid environment where rule-based CAVs must coexist with utility-maximizing Human-Driven Vehicles (HDVs). While much of the existing literature focuses on the cooperative potential of CAVs to smooth traffic flow \cite{talebpour_influence_2016,rios-torres_impact_2017,olia_traffic_2018,zheng_smoothing_2020,li_cooperative_2022,wang_optimal_2022,wu_flow_2022,chen_evidence_2025}, a critical and under-explored dimension is the potential for strategic exploitation. Human drivers, observing the risk-averse and predictable nature of automated systems, may adapt their behavior to ``game'' the system \cite{liu_is_2022}. One example of potential aggressive behavior of HDVs gaming CAVs is cutting in aggressively in front of a CAV with the knowledge that the CAV will yield to avoid collision.

Among the spectrum of competitive driving behaviors, the cut-in maneuver represents the most critical source of friction in mixed-autonomy traffic. Unlike routine lane changes that occur in ample gaps and often facilitate flow homogenization, a cut-in is distinctly characterized by an aggressive entry into a gap smaller than the accepted safety margin, forcing the follower to brake immediately \cite{wang_analysis_2019,gao_discretionary_2022}. In a mixed environment, this maneuver becomes the focal point of strategic exploitation: while a human follower might contest the gap to discourage intrusion, a risk-averse CAV will consistently yield to restore its safety buffer \cite{fu_human-like_2019}. This makes the cut-in not just a local safety hazard, but the primary mechanism by which microscopic behavioral asymmetry degrades macroscopic flow stability \cite{xie_modeling_2019, sultan_modeling_2002}. Therefore, isolating this specific maneuver is essential to quantifying the true cost of mixing defensive CAVs with potentially aggressive HDVs.

Current macroscopic traffic flow models, grounded in the seminal Lighthill-Whitham-Richards (LWR) theory \cite{lighthill1955kinematic,richards1956shock}, typically treat traffic as a homogeneous fluid. While significant advances have been made to incorporate lane-changing effects, most notably through hybrid bounded-acceleration models \cite{laval2006lane} or effective lane-changing intensity parameters \cite{jin2010kinematic}, these approaches predominantly rely on aggregate averages to describe interaction frequency. Consequently, while these models excel at capturing physical congestion phenomena such as shockwave propagation and capacity drops triggered by geometric bottlenecks \cite{cassidy2005increasing}, they struggle to account for behavioral bottlenecks. Even advanced multi-class continuum models \cite{hoogendoorn2000continuum,logghe2008multi,levin2016multiclass}, designed to handle mixed traffic streams, typically differentiate vehicles based on physical constraints (e.g., maximum speed, reaction time) rather than strategic decision-making. They lack the mechanism to describe how individual ``wins'' and ``losses'' in microscopic cut-in games accumulate to degrade macroscopic capacity, leaving a gap in modeling the congestion induced specifically by the heterogeneity of driver strategies.

In a mixed-autonomy setting, the interaction between an aggressive HDV and a defensive CAV is not only a physical event but a game-theoretic one. Foundational studies have long established that lane-changing decisions are best modeled as non-cooperative games, where drivers strategically anticipate the opponent's response rather than reacting solely to physical gaps \cite{kita_merginggiveway_1999,liu_game_2007}. While earlier frameworks focused on mandatory merging \cite{arbis_game_2019}, recent work has extended this to discretionary maneuvers, utilizing concepts like Quantal Response Equilibrium (QRE) to capture the bounded rationality of human decision-making \cite{talebpour_modeling_2015,wang_modeling_2022}. In this context, the decision to cut in is governed by a payoff matrix where the HDV weighs the utility of speed gain against the risk of conflict. However, this negotiation is inherently asymmetric. As noted in recent mixed-traffic studies \cite{yu_row-based_2023}, rational human drivers often adopt a Stackelberg leader role, exploiting the knowledge that a risk-averse CAV (follower in the game) will strictly adhere to programmed safety constraints. Consequently, the ``risk'' term in the HDV's payoff matrix is effectively nullified. This behavioral asymmetry creates a unique form of friction: high-frequency, low-gap cut-ins that force CAVs to brake excessively, generating phantom jams even in the absence of high traffic volume. The phenomenon is confirmed by recent studies identifying cut-ins as a primary catalyst for shockwaves and platoon disruption \cite{shang_cut-ins_2020,lu_modeling_2022}. Comparative assessments further verify that this conservative CAV response significantly degrades local stability compared to pure HDV flows \cite{mullakkal-babu_comparative_2022,fu_human-like_2019}. However, quantifying the macroscopic impact of these microscopic strategic interactions remains an open challenge.

This quantification is particularly difficult because the relevant literature remains bifurcated. Currently, the study of mixed-autonomy traffic generally falls into two disconnected domains:
\begin{itemize}
    \item \textbf{Microscopic Game Theory:} Researchers have extensively modeled vehicle-to-vehicle interactions using differential games and Stackelberg leader-follower frameworks. For instance, Wang et al. \cite{wang_game_2015} demonstrated that lane-changing and car-following are intrinsically coupled non-cooperative games, requiring Model Predictive Control (MPC) to solve for optimal trajectories. While these studies provide rich detail on dyadic interactions, they are computationally prohibitive to scale; simulating thousands of game-theoretic agents to observe network-level effects suffers from the curse of dimensionality.
    \item \textbf{Macroscopic Flow Theory:} Conversely, continuum models efficiently describe network-level dynamics but lack the granularity to capture strategic behavioral shifts. Whether utilizing the lane-changing intensity parameter ($\epsilon$) \cite{jin2010kinematic} or bounded acceleration constraints \cite{laval2006lane}, these frameworks typically model interaction frequency as a function of local traffic density or road geometry, and do not explicitly account for the variable friction caused by mixed-traffic strategic interactions.
\end{itemize}

Notable efforts to bridge this scalability gap have been made using Mean Field Game (MFG) theory. Huang et al. \cite{huang_game-theoretic_2020} introduced a framework connecting microscopic differential games to macroscopic continuum models, allowing for the derivation of optimal velocity control policies in large-scale systems. Similarly, in their stability analysis, Huang et al. \cite{huang_scalable_2020} utilized MFGs to demonstrate that CAVs generally smooth traffic flow and enhance stability.  However, these frameworks primarily address \textit{longitudinal} velocity control and typically conclude that automated agents act as stabilizers. They largely overlook the \textit{lateral} behavioral asymmetry: specifically, the potential for human drivers to ``bully'' risk-averse CAVs during lane changes. Consequently, there remains a distinct lack of analytical frameworks that capture how this specific non-cooperative lateral friction destabilizes the traffic stream.

To address this gap, the primary objective of this study is to bridge the analytical divide between microscopic driver strategy and macroscopic traffic flow stability, with a specific focus on the high-friction cut-in maneuver. Building upon the foundational lane-changing kinematic wave theory established by Jin \cite{jin2010kinematic}, we introduce a novel micro-macro analytical framework designed to quantify how discrete, game-theoretic negotiations between aggressive human drivers and defensive automated systems accumulate to create systemic flow degradation. Central to this framework is the derivation of a closed-form micro-macro bridge function, which calculates the macroscopic lane-changing intensity ($\epsilon$) endogenously based on the instantaneous Stackelberg equilibrium of the microscopic cut-in game, rather than relying on static calibration parameters.

Our contributions are threefold, bridging theoretical modeling with rigorous validation. First, we model the microscopic HDV cut-in maneuver as a Stackelberg game, allowing us to identify a distinct ``exploitation window,'' which is a specific range of gap sizes where rational HDVs will successfully cut off a CAV but would otherwise yield to another HDV. Second, we analytically link this behavior to the continuum scale by deriving a probabilistic aggregation method that maps these type-dependent critical gaps directly to a macroscopic friction parameter, enabling the LWR model to react dynamically to behavioral heterogeneity. Finally, we validate this framework through theoretical proofs and numerical simulations, demonstrating that these strategic interactions strictly reduce capacity and increase shockwave speeds. Crucially, our analysis identifies a convex relationship between capacity and CAV penetration, demonstrating that system performance is minimized during the intermediate stages of adoption.

The remainder of this paper is organized as follows. Section \ref{sec:game_theory} defines the microscopic game-theoretic model. Section \ref{sec:macroscopic_model} outlines the macroscopic flow framework. Section \ref{sec:bridge} details the derivation of the analytical bridge. Section \ref{sec:math_analysis} provides the theoretical proofs of capacity degradation, followed by numerical validation in Section \ref{sec:simulation} and discussion in Section \ref{sec:discussion}. Finally, Section \ref{sec:conclusion} concludes the paper.

\section{Microscopic Game-Theoretic Model for Cut-in Interactions}
\label{sec:game_theory}

To rigorously capture the strategic nature of the cut-in maneuvers discussed in Section \ref{sec:intro}, we establish a microscopic modeling framework that distinguishes between routine driving and adversarial conflict. While standard behavioral models like the Intelligent Driver Model (IDM) and ``Minimizing Overall Braking Induced by Lane Changes'' (MOBIL) effectively govern steady-state car-following and cooperative lane changes, they lack the game-theoretic structure required to model the active negotiation of a cut-in. To address this limitation, we propose a hybrid framework. This architecture integrates the conventional coupled dynamics of IDM and MOBIL with a strategic game-theoretic layer that activates exclusively during Right-of-Way (ROW) conflicts, allowing the HDV agent to transition from reactive driving to strategic utility maximization when an exploitation opportunity arises.

\subsection{Hybrid Control Architecture}
The Ego HDV operates under a switching policy that transitions between reactive lane-keeping and strategic interaction based on the driving context.

\subsubsection{Control Logic}
We formulate the Ego HDV's motion planning as a coupled two-dimensional control problem governed by a hierarchical hybrid architecture. Instead of relying on a monolithic driving policy, the model integrates a reactive layer (IDM and MOBIL) for standard flow stability with a strategic layer (Game Theory) for conflict resolution. The system dynamically arbitrates between these two behavioral modes based on the presence of a Right-of-Way (ROW) competition, ensuring that the vehicle exhibits compliant physics during routine driving while adopting rational utility maximization during critical interactions.

\textbf{Longitudinally}, the commanded acceleration, $u_{x,\text{HDV}}(t)$, switches between a game-theoretic optimal action and a standard car-following law:
\begin{equation}
u_{x,\text{HDV}}(t) =
\begin{cases}
\mathbf{u}_x^*(t), & \text{if } \mathbf{s}(t) \in \mathcal{Z}_{\text{ROW}} \text{ (Strategic Mode)} \\
a_{\text{IDM}}(t), & \text{otherwise (Car-Following Mode)}
\end{cases}
\end{equation}
where $\mathbf{u}_x^*(t)$ is the game-theoretic longitudinal acceleration computed to accomplish the cut-in maneuver, $a_{\text{IDM}}(t)$ is the IDM longitudinal acceleration for regular car-following \cite{treiber2013traffic}, $\mathbf{s}(t)$ is the current system state, and $\mathcal{Z}_{\text{ROW}}$ is the ROW competition zone, which is defined in detail in the following Section \ref{sec:competition_zone}.

\textbf{Laterally}, the control combines continuous lane-keeping with a switching intentional input. The command is:
\begin{equation}
u_{y,\text{HDV}}(t) = -k_\psi \psi_\text{HDV}(t) + u_{y,\text{HDV}}^{\text{intent}}(t)
\end{equation}
where $k_\psi$ is the heading error gain used to dampen rotational oscillations, and $\psi_\text{HDV}(t)$ is the current heading angle relative to the road centerline. The intentional term $u_{y,\text{HDV}}^{\text{intent}}$ follows a three-tiered logic:
\begin{equation}
u_{y,\text{HDV}}^{\text{intent}}(t) = 
\begin{cases}
\mathbf{u}_y^*(t), & \text{if } \mathbf{s}(t) \in \mathcal{Z}_{\text{ROW}} \\
u_{\text{MOBIL}}(t), & \text{if MOBIL criteria met (Regular Lane Change)} \\
u_{\text{keep}}(t), & \text{otherwise (Lane Centering)}
\end{cases}
\end{equation}
where $\mathbf{u}_y^*(t)$ is the game-theoretic steering action for cut-in; $u_{\text{MOBIL}}(t)$ is a constant lateral bias applied towards the target lane, triggered only if both the safety criterion and incentive criterion of the MOBIL model are satisfied \cite{treiber2013traffic}; and $u_{\text{keep}}(t)$ acts as a Proportional-Derivative (PD) lane-centering controller:
$ u_{\text{keep}}(t) = -k_y (y_i(t) - y_{\text{center}}) - k_{vy} v_{y,i}(t) $, 
with $y_{\text{center}}$ the lateral coordinate of the target lane center, and $k_y$ and $k_{vy}$ the proportional and derivative gains penalizing lateral deviation and lateral velocity, respectively.

\subsubsection{ROW Competition Zone}
\label{sec:competition_zone}

The transition from the default car-following mode to the strategic game-theoretic mode is governed by the entry into the ROW Competition Zone, denoted as $\mathcal{Z}_{\text{ROW}}$. We formally define $\mathcal{Z}_{\text{ROW}}$ as the specific region of the state space where the Ego HDV and the TV are in active conflict for the same longitudinal space.

The boundary of this zone is determined by three kinematic metrics: the available following gap ($g_{\text{follow}}$), the relative velocity ($\Delta v$), and the projected Time-to-Collision (TTC). The TTC serves as a critical proxy for urgency, calculated based on the instantaneous closure rate between the TV's front bumper and the Ego HDV's rear bumper:
\begin{equation}
\text{TTC}(t) = 
\begin{cases} 
  \frac{g_{\text{follow}}(t)}{v_{\text{TV}}(t) - v_{\text{HDV}}(t)} & \text{if } v_{\text{TV}}(t) > v_{\text{HDV}}(t) \\
  \infty & \text{otherwise (vehicles separating or static)}
\end{cases}
\end{equation}

A system state $\mathbf{s}(t)$ is considered to be within the competition zone if and only if it satisfies the following concurrent conditions:
\begin{equation}
\mathbf{s}(t) \in \mathcal{Z}_{\text{ROW}} \iff 
\begin{cases} 
  0 < g_{\text{follow}}(t) \le g_{\text{max}}^{\text{comp}} & \text{(Proximity Condition)} \\
  |\Delta v(t)| \le \Delta v_{\text{max}}^{\text{comp}} & \text{(Kinematic Condition)} \\
  \text{TTC}(t) \le \tau_{\text{max}}^{\text{comp}} & \text{(Urgency Condition)}
\end{cases}
\end{equation}
where the parameters define the physical limits of meaningful ROW competition: $g_{\text{max}}^{\text{comp}}$ (e.g., 60 m) represents the maximum contestable gap; beyond this distance, the space is considered ``free,'' and no strategic competition is required. $\Delta v_{\text{max}}^{\text{comp}}$ (e.g., 10 m/s) bounds the speed difference; if the relative speed exceeds this threshold, the vehicles are in a passing regime rather than an interactive merging regime. $\tau_{\text{max}}^{\text{comp}}$ (e.g., 5.0 s) is the safety horizon; infinite or large TTC values imply no immediate collision risk, negating the need for game-theoretic resolution.

\subsection{Game-Theoretic Formulation}
When the vehicle is inside $\mathcal{Z}_{\text{ROW}}$, the interaction is modeled as a two-player, non-cooperative, asymmetric game. We restrict the formulation to a dyadic setting because the cut-in maneuver is fundamentally a pairwise conflict over a specific headway gap; while surrounding traffic provides context, the resolution relies primarily on the lead-follow dynamics between the Ego HDV and the TV. Furthermore, the game is non-cooperative because, unlike connected automated systems that might optimize a joint objective, human drivers act as rational agents maximizing individual utility without binding agreements. Finally, the interaction is inherently asymmetric: the agents occupy distinct strategic roles, with the Ego HDV acting as the ``challenger'' seeking to disrupt the flow and the TV acting as the ``defender'' possessing the initial ROW.

\subsubsection{Players and System State}
The interaction involves two agents:
\begin{itemize}
    \item Player 1: Ego HDV (Strategic Agent). A utility-maximizing agent that optimizes its trajectory to gain the right-of-way.
    \item Player 2: TV (CAV or HDV). The opponent in the target lane, defined by type $k_\text{TV} \in \{\text{CAV}, \text{HDV}\}$.
\end{itemize}

The full system state $\mathbf{s}(t)$ captures the physical kinematics and the environmental context:
\begin{equation}
\mathbf{s}(t) = [ \mathbf{x}_\text{HDV}(t), \mathbf{x}_\text{TV}(t), k_\text{TV}, \mathbf{g}_\text{env}(t) ]^\top
\end{equation}
where the physical state vector for each vehicle $i$ includes position, velocity, and acceleration components:
\begin{equation}
\mathbf{x}_i(t)=\begin{bmatrix} x_i(t) & y_i(t) & v_{x,i}(t) & v_{y,i}(t) & a_{x,i}(t) & \psi_i(t) \end{bmatrix}^\top
\end{equation}
with $x_i(t)$ and $y_i(t)$ denoting the longitudinal and lateral positions, $v_{x,i}(t)$ and $v_{y,i}(t)$ the corresponding velocity components, $a_{x,i}(t)$ representing the longitudinal acceleration, and $\psi_i(t)$ the vehicle's heading angle relative to the road centerline. The environmental vector $\mathbf{g}_\text{env}$ includes specific effective gaps required for the strategic assessment, defined in Table \ref{table:effgaps}.

\begin{table}[h!]
\centering
\renewcommand{\arraystretch}{1.3}
\setlength{\tabcolsep}{4pt}
\caption{Effective Gaps for Strategic Assessment}
\label{table:effgaps}
\small
\begin{tabular}{|l|p{0.35\textwidth}|p{0.35\textwidth}|}
\hline
\textbf{Gap Type} & \textbf{Description} & \textbf{Strategic Role} \\
\hline
Lead Gap ($g_\text{lead}$) & Distance to vehicle ahead of TV. & Represents the reward (available space) for the cut-in. \\
\hline
Follow Gap ($g_\text{follow}$) & Distance from TV to Ego HDV. & Measures the conflict intensity and collision risk. \\
\hline
Escape Gap ($g_\text{escape}$) & Distance to following vehicle in current lane. & Represents the safety net for aborting the merge. \\
\hline
\end{tabular}
\end{table}

\subsubsection{Optimization Problem}
\label{sec:optimization}
The Ego HDV's objective is to solve for the optimal control sequence $(\mathbf{u}_x^*, \mathbf{u}_y^*)$ that maximizes its total expected utility $U_{\text{HDV}}$ over a prediction horizon $H$. This is formulated as a MPC problem:
\begin{equation}
(\mathbf{u}_x^*(t), \mathbf{u}_y^*(t)) = \displaystyle \arg\max_{(\mathbf{u}_x, \mathbf{u}_y) \in \mathcal{A}_\text{HDV}} U_{\text{HDV}} = \displaystyle \arg\max_{(\mathbf{u}_x, \mathbf{u}_y) \in \mathcal{A}_\text{HDV}} \sum_{k=0}^{H} \gamma^k \cdot r_{\text{HDV}}(\mathbf{s}_{t+k}, \mathbf{u}_{t+k}; \pi_\text{TV})
\end{equation}
subject to the state transition dynamics:
\begin{equation}
\mathbf{s}_{t+k+1} = f(\mathbf{s}_{t+k}, \mathbf{u}_{t+k}, \pi_{\text{TV}}(\mathbf{s}_{t+k}))
\end{equation}
where $f(\cdot)$ represents the discrete-time kinematic update of the vehicle states (position, velocity, heading) based on a standard point-mass bicycle model discretized with time step $\Delta t$. $\gamma$ is the discount factor and $\mathcal{A}_\text{HDV}$ represents the admissible action space $\{ (\mathbf{u}_x, \mathbf{u}_y) | \mathbf{u}_x \in [u_x^\text{min}, u_x^\text{max}], \mathbf{u}_y \in [u_y^\text{min}, u_y^\text{max}] \}$, bounded by the vehicle's physical limits.

In this formulation, $\pi_{\text{TV}}$ acts as the forward model of the environment. The Ego HDV does not negotiate with the TV; rather, it predicts the TV's deterministic reaction (e.g., yielding) via state transitions and incorporates that reaction into its own trajectory planning.

The instantaneous reward function $r_{\text{HDV}}$ balances the incentive to merge against safety costs:
\begin{equation}
r_{\text{HDV}}(t) = \underbrace{w_{v} v_{\text{HDV}}(t) f(g_\text{lead})}_{\text{Motivation}} - \underbrace{w_{a} \| \mathbf{u}(t) \|^2}_{\text{Effort}} - \underbrace{w_{c} \Psi_\text{conflict}(t)}_{\text{Safety}}
\end{equation}

Crucially, the conflict penalty $\Psi_{\text{conflict}}(t)$ is determined by the Ego HDV's internal prediction of the opponent's strategy, which relies on a key informational asymmetry regarding the TV's policy $\pi_{\text{TV}}$. We assume the Ego HDV possesses perfect knowledge that a CAV opponent adheres to a strict, deterministic yielding protocol, denoted as $\pi_{\text{safe}}$. Under $\pi_{\text{safe}}$, the target is predicted to decelerate immediately to restore a safe gap, ensuring TTC and resulting in a negligible penalty ($\Psi_{\text{conflict}} \approx 0$) that effectively discounts the risk. In contrast, against an HDV opponent, the Ego agent lacks this assurance and conservatively assumes a non-yielding protocol, denoted as $\pi_{\text{reactive}}$, where the target maintains its velocity and Right-of-Way. This prediction results in a rapidly diminishing TTC, triggering a prohibitive penalty ($\Psi_{\text{conflict}} \gg 0$) that outweighs the utility of the gap and aborts the maneuver.

We formally classify this interaction as a Stackelberg Game, where the Ego HDV acts as the strategic leader and the TV acts as the follower. By assuming that the leader possesses information regarding the follower's reaction function ($\pi_{\text{TV}}$), we reduce the complex game to a hierarchical optimal control problem. This formulation is physically representative of aggressive cut-ins where the initiator forces a reaction from the defender. The formulation is also computationally advantageous, as it avoids the need for iterative Nash equilibrium solutions, allowing for the direct derivation of the closed-form decision boundaries used in the micro-macro bridge.

\section{Macroscopic Flow Model Considering Lane Change Effects}
\label{sec:macroscopic_model}

Having established the microscopic mechanism of strategic exploitation in Section \ref{sec:game_theory}, we now seek to quantify the systemic impact of these interactions on traffic throughput and stability. To achieve this, we adopt the macroscopic framework proposed by Jin in 2010 \cite{jin2010kinematic}, which explicitly incorporates the frictional effects of lane-changing maneuvers into the kinematic wave equation. This framework provides the necessary mathematical structure to translate the discrete, high-frequency cut-in events identified in our game-theoretic model into a continuous, aggregate flow variable.

The core premise of this model is that lane-changing vehicles consume more road capacity than lane-keeping vehicles due to the creation of voids (i.e., safety gaps) and the disruption of following platoons. Jin captures this phenomenon by introducing a dimensionless lane-changing intensity parameter, $\epsilon(x,t)$, which scales the physical density $\rho$ to a higher effective density $\bar{\rho}$:

\begin{equation}
\bar{\rho}(x,t) = \rho(x,t) \left( 1 + \epsilon(x,t) \right)
\end{equation}
where $\rho(x,t)$ is the actual physical density (veh/km) at location $x$ and time $t$, and $\epsilon(x,t) \ge 0$ represents the fractional increase in perceived density caused by turbulence. If $\epsilon=0$, the flow is laminar; if $\epsilon > 0$, the traffic stream behaves as if it were denser than it physically is, leading to speed reduction.

In standard traffic flow theory, equilibrium speed is a function of physical density, $V(\rho)$. However, in the presence of friction, drivers reduce their speed in response to the \textit{effective} density. The modified speed-density relationship becomes:
\begin{equation}
v = V(\bar{\rho}) = V \left( (1+\epsilon)\rho \right)
\end{equation}
Consequently, the macroscopic flow rate $q$ (flux) is depressed. The Fundamental Diagram (FD) describing the flow-density relationship is modified as:
\begin{equation}
q(x,t) = \rho(x,t) \cdot V \left( (1+\epsilon(x,t))\rho(x,t) \right)
\label{eq:modified_flux}
\end{equation}
The evolution of traffic flow is then governed by the inhomogeneous LWR conservation law:
\begin{equation}
\frac{\partial \rho}{\partial t} + \frac{\partial}{\partial x} \left( \rho V((1+\epsilon)\rho) \right) = 0
\end{equation}

The critical innovation in this present paper lies in the derivation of $\epsilon$. While Jin originally formulated $\epsilon$ as a function of macroscopic aggregate variables (e.g., ramp flow ratios or geometric lane drops), we propose that in mixed-autonomy traffic, $\epsilon$ is fundamentally driven by strategic behavioral interactions.

Specifically, the exploitation event modeled in Section \ref{sec:game_theory}, where HDVs cut in front of CAVs, generate a specific magnitude of turbulence that cannot be captured by simple geometric factors. In the following section, we introduce a novel micro-macro bridge that analytically derives $\epsilon(x,t)$ as a function of the game-theoretic cut-in probability and the market penetration rate of CAVs.

\section{Bridging from Microscopic Game to Macroscopic Flow}
\label{sec:bridge}

In this section, we establish the analytical link between the microscopic driver behavior modeled in Section \ref{sec:game_theory} and the macroscopic flow evolution described in Section \ref{sec:macroscopic_model}. While the game-theoretic framework captures the precise mechanics of individual cut-in decisions, integrating these discrete events into a continuum model requires aggregating the cumulative effect of these interactions. We propose a methodology to map the strategic outcomes of the two-player game directly to the macroscopic friction parameter $\epsilon$, enabling the assessment of mixed-autonomy traffic performance without the computational burden of simulating every agent's optimization process.

\subsection{Distilling the Game Model to a Decision Boundary}
The full game-theoretic model for HDV cut-in is a MPC problem that must be solved numerically. While this is accurate for modeling a single HDV's decision-making process, it is computationally infeasible to embed this iterative optimization inside every single HDV agent within a large-scale macroscopic simulation. Therefore, we propose a method to ``distill'' the game model into a decision boundary. The purpose of the macro-micro bridge is to find an analytical link, which requires a simpler, closed-form representation of the game's outcome.

The HDV's strategic decision-making is governed by the instantaneous reward function, $r_{HDV}(t)$, which balances the utility of speed gain against the costs of control effort and safety risks. As defined in Section \ref{sec:optimization}, this function incorporates a positive weighting for speed gain scaled by the leading gap, $f(g_{lead})$, and a severe penalty term, $\Psi_{conflict}(t)$, which acts as a binary switch. This penalty imposes negligible cost during safe interactions but prohibitive cost when the predicted TTC drops below a critical threshold. This structure introduces a distinct non-linearity: a rational HDV attempts a cut-in only when the expected utility of the maneuver exceeds that of staying in the current lane.

Crucially, this utility trade-off is sensitive to the TV's type. When the target is a CAV, the HDV predicts a yielding response, which prevents the TTC from dropping below the threshold even at smaller gaps. Consequently, $\Psi_{conflict}$ remains zero for tighter merges, making the maneuver attractive earlier. Conversely, when the target is an HDV, the prediction of a reactive (or non-yielding) policy triggers $\Psi_{conflict}$ at those same gaps, suppressing the maneuver.

This logic effectively establishes two distinct decision boundaries based on the available gap size in the target lane. We define these boundaries as the type-dependent critical gaps, $g_{crit}^{cut, k_{TV}}$. Consequently, the complex game-theoretic interaction can be robustly proxied by a conditional parameter where a cut-in is attempted if and only if the available gap $g$ satisfies the condition $g > g_{crit}^{cut, k_{TV}}$.

Formally, let the decision to cut in be denoted by the binary variable $D_{\text{cut}} \in \{0, 1\}$. In the full game-theoretic model, this decision is a result of the utility maximization:
\begin{equation}
    D_{\text{cut}}^{\text{Game}}(\mathbf{s}(t)) = \mathbb{I}\left( \max_{\mathbf{u} \in \mathcal{A}_{\text{HDV}}} U_{\text{HDV}}(\mathbf{s}(t), \mathbf{u}; \pi_{\text{TV}}) > U_{\text{stay}} \right)
\end{equation}
where $\mathbb{I}(\cdot)$ is the indicator function and $\pi_{\text{TV}}$ depends on the target type $k_{\text{TV}}$. The term $U_{\text{stay}}$ represents the baseline utility of remaining in the current lane, derived from the standard car-following rewards (maintaining speed and safety) without the additional reward of the lead gap or the penalty of conflict.

In our distilled model, we approximate this complex mapping with a threshold function dependent on the target lane following gap, $g_{follow}$, and the target type:
\begin{equation}
    D_{cut}^{Distilled}(g_{follow}, k_{TV}) = \begin{cases} 
      1 & \text{if } g_{follow} > g_{crit}^{cut, k_{TV}} \\
      0 & \text{otherwise}
   \end{cases}
\end{equation}
The parameters are defined as the gap sizes where the expected utility of the cut-in transitions from negative to positive for each target type:
\begin{equation}
    g_{crit}^{cut, k_{TV}} = \inf \{ g \in \mathbb{R}^+ : U_{cut}(g | k_{TV}) > U_{stay} \}
\end{equation}
Due to the conservative nature of the CAV policy, we derive the inequality $g_{crit}^{cut, CAV} < g_{crit}^{cut, HDV}$. This simplification allows us to replace the computationally expensive on-line optimization with a static, type-dependent check within the macroscopic derivation.

\subsection{Micro-Macro Bridge}

To bridge the microscopic game-theoretic modeling of cut-ins and the macroscopic flow model considering lane change effects, we employ a probabilistic aggregation method. Our objective is to translate the discrete, agent-level cut-in decisions into a continuous expected lane-changing intensity function:
\begin{equation}
\epsilon(\rho, \mathbf{\Theta})
\end{equation}
where $\rho$ is the traffic density and $\mathbf{\Theta}$ represents the vector of system parameters. These parameters include:
\begin{itemize}
    \item Behavioral weights: $w_v$, $w_a$, and $w_c$ governing the HDV utility function.
    \item Traffic composition: The market penetration rate of CAVs, denoted as $m_{\text{C}}$, and the proportion of HDVs $m_{\text{H}} = 1 - m_{\text{C}}$. These determine the frequency of specific HDV-CAV interactions.
    \item Physical constraints: Limits such as $a_{\text{min}}$, $a_{\text{max}}$, desired speed $v_d$, and the lane-change impact duration $t_{\text{LC}}$.
    \item CAV policy: The explicit safety gap $g_{\text{safe}}$ that dictates the automated yielding response.
\end{itemize}

\subsubsection{Probabilistic State Distribution}
First, we require a statistical description of the microscopic states available to drivers. The critical state variable governing the cut-in decision is the available gap $g$ in the target lane.

For the purpose of this analytical derivation, we assume that the headway gaps follow a shifted exponential distribution parameterized by the local density $\rho$. While we select this distribution for its mathematical tractability and reasonable approximation of unsaturated flow, we emphasize that the proposed bridge framework is distribution-agnostic. In real-world applications, any empirically validated distribution $P(g|\rho)$ (e.g., Pearson Type III or log-normal) can be substituted into the subsequent integration steps without altering the fundamental methodology.

Under the exponential assumption, the probability density function for observing a gap of size $g$ is given by:
\begin{equation}
P(g | \rho) = \lambda e^{- \lambda g} \quad \text{for} \, g \ge 0
\end{equation}
where the decay parameter is $\lambda = \frac{\rho}{1-\rho L_{\text{veh}}}$, with $L_{\text{veh}}$ representing the average effective vehicle length.

\subsubsection{Analytical Cut-in Condition}
A rational HDV attempts a cut-in if the expected utility of the maneuver exceeds the utility of maintaining the current lane. From the microscopic game derived in Section \ref{sec:game_theory}, this general decision variable $D_{\text{cut}}$ is defined as:
\begin{equation}
D_{\text{cut}} = 1 \iff U_{\text{cut}} > U_{\text{stay}} \iff g > g_{\text{crit}}^{\text{cut}, k_{\text{TV}}}
\end{equation}
However, for the macroscopic bridge, we must distinguish between standard traffic turbulence and high-friction conflict. We explicitly define a strategic cut-in event, denoted as $\mathcal{E}_{\text{strat}}$, as a maneuver into a gap that is contestable (acceptable via game theory) but smaller than the threshold for a regular, stress-free lane change ($g_{\text{crit}}^{\text{reg}}$).

While $D_{\text{cut}}$ indicates any lane change, $\mathcal{E}_{\text{strat}}$ identifies only those that exploit the target's braking response. The condition for this specific friction-inducing event against a target of type $k_{\text{TV}}$ is:
\begin{equation}
\mathcal{E}_{\text{strat}}(g, k_{\text{TV}}) \iff g_{\text{crit}}^{\text{cut}, k_{\text{TV}}} < g < g_{\text{crit}}^{\text{reg}}
\end{equation}

\subsubsection{Derivation of Lane-Changing Intensity}
To quantify the macroscopic impact, we must first determine the likelihood of each maneuver type. We calculate the probability of a strategic cut-in event ($\mathcal{E}_{\text{strat}}$) by integrating the gap probability density function $P(g|\rho)$ over the contestable region $[\, g_{\text{crit}}^{\text{cut}, k_{\text{TV}}}, \, g_{\text{crit}}^{\text{reg}} \,]$.

Physically, this probability represents the likelihood that a randomly encountered gap falls within the ``aggressive'' bandwidth, which is large enough to satisfy the game-theoretic utility constraint ($U_{\text{cut}} > U_{\text{stay}}$) but small enough to generate friction and require the target to yield. Depending on the target vehicle type ($k_{\text{TV}}$), these probabilities are:
\begin{align}
P_{\text{HDV}\to\text{CAV}}^{\text{cut}}(\rho) &= \int_{g_{\text{crit}}^{\text{cut,CAV}}}^{g_{\text{crit}}^{\text{reg}}} \lambda e^{- \lambda g} dg = e^{- \lambda g_{\text{crit}}^{\text{cut,CAV}}} - e^{- \lambda g_{\text{crit}}^{\text{reg}}} \\
P_{\text{HDV}\to\text{HDV}}^{\text{cut}}(\rho) &= \int_{g_{\text{crit}}^{\text{cut,HDV}}}^{g_{\text{crit}}^{\text{reg}}} \lambda e^{- \lambda g} dg = e^{- \lambda g_{\text{crit}}^{\text{cut,HDV}}} - e^{- \lambda g_{\text{crit}}^{\text{reg}}}
\end{align}

In contrast, we define regular lane changes ($\mathcal{E}_{\text{reg}}$) as discretionary maneuvers into large, uncontested gaps where no braking is required by the follower. The probability of encountering such a cooperative gap is found by integrating the tail of the distribution beyond the regular threshold:
\begin{equation}
P_{\text{HDV}}^{\text{reg}}(\rho) = \int_{g_{\text{crit}}^{\text{reg}}}^\infty P(g | \rho) dg = e^{- \lambda g_{\text{crit}}^{\text{reg}}}
\end{equation}

To determine the macroscopic impact, we adopt Jin's \cite{jin2010kinematic} formulation for lane-changing intensity, $\epsilon \propto \frac{t_{\text{LC}} \cdot v}{L}$. We define the effective impact duration $t_{\text{LC}}$ as a function of the gap size $g$, modeled as a reverse sigmoid. This captures the inverse relationship between gap availability and flow disruption: tighter gaps, while physically executed with faster lateral movement, impose a significantly longer dissipative impact on the following stream due to the induced braking shockwave.
\begin{equation}
    t_{\text{LC}}(g) = t_{\text{min}} + (t_{\text{max}} - t_{\text{min}}) \cdot \frac{1}{1 + e^{k (g - g_{\text{mid}})}}
\end{equation}
where $t_{\text{LC}}^{\text{reg}} \approx t_{\text{min}}$ represents the baseline maneuver time for cooperative lane changes, and $t_{\text{LC}}^{\text{cut}} \approx t_{\text{max}}$ represents the extended congestion impact duration associated with aggressive cut-ins.

Finally, assuming randomly mixed traffic, the total lane-change intensity $\epsilon$ is the sum of the intensities from three distinct behaviors: strategic cut-ins (weighted by the probability of HDV-CAV vs. HDV-HDV pairings) and regular lane changes.
\begin{equation}
\epsilon(\rho | \mathbf{\Theta}) = \epsilon_{\text{HDV}}^{\text{cut}}(\rho) + \epsilon_{\text{HDV}}^{\text{reg}}(\rho) + \epsilon_{\text{CAV}}^{\text{reg}}(\rho)
\end{equation}

The cut-in intensity $\epsilon_{\text{HDV}}^{\text{cut}}$ explicitly captures the asymmetric interaction logic:
\begin{equation}
\label{eq:epsilon_bridge}
\epsilon_{\text{HDV}}^{\text{cut}}(\rho) \approx \frac{v(\rho)}{L} \left[ \underbrace{ (P_{\text{HDV}\to\text{CAV}}^{\text{cut}} \cdot m_{\text{H}} m_{\text{C}}) \cdot t_{\text{LC}}^{\text{cut}} }_{\text{HDV exploits CAV}} + \underbrace{ (P_{\text{HDV}\to\text{HDV}}^{\text{cut}} \cdot m_{\text{H}}^2) \cdot t_{\text{LC}}^{\text{cut}} }_{\text{HDV vs HDV}} \right]
\end{equation}
Substituting the probability expressions:
\begin{equation}
\epsilon_{\text{HDV}}^{\text{cut}}(\rho) \approx \frac{v(\rho) t_{\text{LC}}^{\text{cut}}}{L} \left[ m_{\text{H}} m_{\text{C}} \left(e^{- \lambda g_{\text{crit}}^{\text{cut,CAV}}} - e^{- \lambda g_{\text{crit}}^{\text{reg}}}\right) + m_{\text{H}}^2 \left(e^{- \lambda g_{\text{crit}}^{\text{cut,HDV}}} - e^{- \lambda g_{\text{crit}}^{\text{reg}}}\right) \right]
\end{equation}
The regular lane change intensities are given by:
\begin{align}
\epsilon_{\text{HDV}}^{\text{reg}}(\rho) &\approx m_{\text{H}} e^{- \lambda g_{\text{crit}}^{\text{reg}}} \cdot \frac{v(\rho) t_{\text{LC}}^{\text{reg}}}{L} \\
\epsilon_{\text{CAV}}^{\text{reg}}(\rho) &\approx m_{\text{C}} e^{- \lambda g_{\text{crit,CAV}}^{\text{reg}}} \cdot \frac{v(\rho) t_{\text{LC, CAV}}^{\text{reg}}}{L}
\end{align}
\noindent where $g_{\text{crit,CAV}}^{\text{reg}}$ and $t_{\text{LC, CAV}}^{\text{reg}}$ denote the programmed safety gap and impact duration for CAVs, respectively. We generally assume $g_{\text{crit,CAV}}^{\text{reg}} \ge g_{\text{crit}}^{\text{reg}}$, reflecting the conservative calibration of CAVs even during non-adversarial maneuvers. Additionally, while the formulation allows for distinct impact durations, we approximate $t_{\text{LC, CAV}}^{\text{reg}} \approx t_{\text{LC}}^{\text{reg}}$, as voluntary CAV lane changes are performed only when safety criteria are strictly met, generating turbulence profiles similar to standard non-aggressive maneuvers. This analytical function $\epsilon(\rho)$ is then directly substituted into the macroscopic conservation law derived in Section \ref{sec:macroscopic_model}, closing the loop between the microscopic game and the macroscopic flow evolution.

\section{Analysis of Cut-in Impacts on Traffic Flow}
\label{sec:math_analysis}

In this section, we utilize the proposed micro-macro framework to mathematically analyze the systemic impact of strategic cut-ins on macroscopic traffic dynamics. Having established the analytical link between the microscopic Stackelberg game and the macroscopic lane-changing intensity $\epsilon$, we now proceed to rigorously quantify the resulting alterations to traffic flow. We derive the analytical conditions under which these game-theoretic interactions inevitably lead to system degradation. Specifically, we present formal proofs demonstrating that the existence of a microscopic ``exploitation window'' ($g_{\text{crit}}$) mathematically guarantees a reduction in macroscopic capacity, a decrease in equilibrium speeds, and the premature onset of flow instability. These analytical results serve to validate that the capacity drops and phantom jams observed in mixed-autonomy traffic are intrinsic mathematical consequences of the behavioral friction, rather than numerical artifacts.

\subsection{Assumptions}
We begin by stating the core assumptions that define our theoretical model, upon which our proofs are based.

\begin{assumption}[HDV Behavior]
The HDV is a strategic agent whose actions are chosen to maximize a utility function. The instantaneous reward, $r_{\text{HDV}}(t)$, includes a positive weight for speed gain, denoted $w_v > 0$.
\end{assumption}

\begin{assumption}[CAV Behavior]
The CAV is a rule-based agent following a fixed, predictable, and safety-oriented policy, $\pi_{\text{safe}}$. This policy is yielding, meaning the CAV will decelerate to restore a safe following distance when its safety threshold is breached by another vehicle.
\end{assumption}

\begin{assumption}[Traffic Environment]
The traffic environment is characterized by the following properties:
\begin{enumerate}
    \item A mixed composition of HDVs and CAVs with non-zero market shares, $m_{\text{H}} > 0$ and $m_{\text{C}} > 0$ for HDVs and CAVs, respectively.
    \item The existence of type-dependent critical gaps where the ``exploitative'' gap for CAVs is strictly smaller than the regular gap: $0 < g_{\text{crit}}^{\text{cut,CAV}} < g_{\text{crit}}^{\text{reg}}$.
    \item The probability distribution of available gaps, $g$, for a given traffic density, $\rho$, follows an exponential distribution, $P(g|\rho)=\lambda e^{-\lambda g}$, where $\lambda = \rho / (1-\rho L_{\text{veh}}) > 0$ for $\rho > 0$.
\end{enumerate}
\end{assumption}

\begin{assumption}[Macroscopic Flow Dynamics, i.e., Jin's \cite{jin2010kinematic} framework]
\label{assum:flow_dynamics}
The macroscopic traffic flow, $q$, is governed by a model incorporating a lane-changing intensity parameter, $\epsilon$, given by $q=\rho V((1+\epsilon)\rho)$. The velocity function, $V(\cdot)$, is a strictly decreasing function of its argument, the effective density $\bar{\rho} = (1+\epsilon)\rho$.
\end{assumption}

\subsection{Propositions on Microscopic Strategic Interactions}
In this subsection, we establish the fundamental properties of the game-theoretic interactions. We characterize the structure of the optimal decision policy, prove the existence of exploitative behavior, and decompose its contribution to macroscopic friction.

\subsubsection*{Characterization of the Decision Boundary}
First, it is necessary to verify that the complex utility maximization problem defined in the game reduces to a physically interpretable decision rule. While the utility function considers multiple variables (safety, speed, jerk), rational driving behavior typically manifests as a monotonic response to gap availability. Proposition \ref{prop:threshold_optimality} confirms that our utility formulation collapses into a clean threshold logic.

\begin{proposition}[Optimality of Threshold Policy]
\label{prop:threshold_optimality}
Under the instantaneous reward function defined in Eq. (18), the optimal cut-in decision is governed by a unique threshold policy. Specifically, the utility difference $\Delta U(g) = U_{\text{cut}}(g) - U_{\text{stay}}$ is a strictly increasing function of the available gap $g$, ensuring that a unique critical gap $g_{\text{crit}}$ exists such that a cut-in is optimal if and only if $g > g_{\text{crit}}$.
\end{proposition}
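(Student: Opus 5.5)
We plan to prove the proposition by writing $\Delta U(g)$ as the sum of the three components of the instantaneous reward, showing that each is non-decreasing in $g$ with at least one strictly increasing, and then using continuity and limiting behavior to pin down a unique crossing point. Throughout, $U_{\text{stay}}$ is treated as independent of the target-lane gap $g$. This is consistent with its definition as the baseline car-following utility, which contains neither the lead-gap reward nor the conflict penalty. It follows that $\Delta U(g)$ inherits all of its $g$-dependence from $U_{\text{cut}}(g)$. Write
\begin{equation*}
\Delta U(g) = \sum_{k=0}^{H}\gamma^k\Big[w_v v_{\text{HDV}} f(g) - w_a\|\mathbf{u}^*_{t+k}(g)\|^2 - w_c\Psi_{\text{conflict}}(g)\Big] - U_{\text{stay}},
\end{equation*}
where the maximization over $\mathcal{A}_{\text{HDV}}$ has been carried out.

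The key steps, in order, are as follows.

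\textbf{(i) Motivation term.} We adopt as a modeling property that $f$ is continuous and strictly increasing in the available space, since a larger gap is a greater reward. Because $w_v>0$ by the HDV behavior assumption and $v_{\text{HDV}}>0$, the motivation term is strictly increasing in $g$.

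\textbf{(ii) Effort term.} Enlarging $g$ relaxes the kinematic requirements of the maneuver. Any control sequence that is feasible for completing the merge at gap $g$ remains feasible at any $g'>g$. Hence the optimal effort cost $\min\|\mathbf{u}\|^2$ is non-increasing in $g$, and its negative is non-decreasing.

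\textbf{(iii) Safety term.} Under either predicted policy, $\pi_{\text{safe}}$ or $\pi_{\text{reactive}}$, the TTC $g_{\text{follow}}/\Delta v$ is increasing in the gap. We model $\Psi_{\text{conflict}}$ as a non-increasing function of TTC, namely the switch that is prohibitive below the threshold and negligible above it. It is therefore non-increasing in $g$. Combining (i)--(iii), $\Delta U$ is strictly increasing.

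\textbf{(iv) Existence of the crossing.} Strict monotonicity alone gives at most one crossing, so we also need the limits. As $g\to 0^+$, the TTC tends to $0$. This triggers $\Psi_{\text{conflict}}\gg 0$ for an HDV target, and for a CAV target the small lead reward $f(0)$ is insufficient. In either case $\Delta U<0$. For $g$ large, specifically beyond $g_{\max}^{\text{comp}}$, the conflict penalty vanishes and $f(g)$ saturates at the free-flow reward, so $\Delta U>0$. By the intermediate value theorem, together with strict monotonicity, there is a unique $g_{\text{crit}}$ with $\Delta U>0$ exactly when $g>g_{\text{crit}}$. This value coincides with the infimum definition of $g_{\text{crit}}^{\text{cut},k_{\text{TV}}}$ given in the bridge section.

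The main obstacle is step (iii) combined with continuity. $\Psi_{\text{conflict}}$ is described as a binary switch, so $\Delta U$ may jump at the TTC threshold. Such a jump destroys the intermediate value argument, though it does not affect monotonicity. We would first try to handle this by defining $g_{\text{crit}}=\inf\{g:\Delta U(g)>0\}$ directly. Monotonicity alone then gives the ``if and only if'' characterization, up to the single boundary point. If needed, we would instead smooth $\Psi_{\text{conflict}}$ with a steep sigmoid in TTC and pass to the limit.

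A secondary subtlety is that the optimal trajectory $\mathbf{u}^*$ depends on $g$. We would handle this by an envelope or feasible-set-inclusion argument, as in step (ii), rather than by differentiating through the MPC solution.
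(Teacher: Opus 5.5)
Your proposal is correct and follows essentially the same route as the paper's proof. Both use the same three-term decomposition (strictly increasing sigmoid motivation, non-increasing effort cost, non-increasing step penalty), the same sign arguments as $g\to 0$ and $g\to\infty$, and the same monotone-crossing conclusion. Your refinements only make that argument more careful: you work with $\Delta U$ and a $g$-independent $U_{\text{stay}}$ rather than the instantaneous reward; you use $g_{\text{crit}}=\inf\{g:\Delta U(g)>0\}$ instead of the intermediate value theorem and a derivative, which the paper applies even though $\Psi_{\text{conflict}}$ is a step function; and you flag the $g$-dependence of the optimal control.
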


\begin{proof}
The instantaneous reward for a cut-in is given by:
\begin{equation}
    r_{\text{HDV}}(g) = w_{v} v_{\text{HDV}} f(g) - w_{a} \|\mathbf{u}_{\text{HDV}}\|^2 - w_{c} \Psi_{\text{conflict}}(g)
\end{equation}
We analyze the sensitivity of the reward components with respect to the gap size $g$:
\begin{itemize}
    \item Speed Reward: The term $f(g)$ is defined as a sigmoid function of the gap, which is strictly monotonic increasing: $\frac{\partial f}{\partial g} > 0$.
    \item Control Cost: A larger gap allows for a smoother merging trajectory with lower required acceleration magnitudes ($\mathbf{u}_{x}, \mathbf{u}_{y}$), implying the control cost is non-increasing with respect to gap size: $\frac{\partial \|\mathbf{u}\|^2}{\partial g} \le 0$.
    \item Safety Penalty: The binary penalty $\Psi_{\text{conflict}}(g)$ is a step function that transitions from 1 to 0 as $g$ increases beyond the safety threshold. Thus, it is non-increasing.
\end{itemize}
Combining these terms, the total derivative of the utility with respect to the gap is strictly positive:
\begin{equation}
    \frac{\partial r_{\text{HDV}}}{\partial g} > 0
\end{equation}
Furthermore, as $g \to 0$, the safety penalty dominates ($r_{\text{HDV}} < 0$). As $g \to \infty$, the speed reward dominates ($r_{\text{HDV}} > 0$). By the Intermediate Value Theorem and the property of strict monotonicity, there exists a unique root $g_{\text{crit}}$ such that $r_{\text{HDV}}(g_{\text{crit}}) = 0$. Consequently, $r_{\text{HDV}}(g) > 0$ if and only if $g > g_{\text{crit}}$.
\end{proof}

\noindent \textit{Remark:} This proposition is computationally significant. It implies that for the purpose of macroscopic modeling, we do not need to solve the full optimization problem at every time step. Instead, we can pre-calculate the critical gap $g_{\text{crit}}$ as a static behavioral parameter for each driver type, simplifying the aggregation process.

\subsubsection*{Verification of Exploitative Behavior}
Having established the decision boundary, we next address the core hypothesis of this study: that mixed-autonomy traffic generates a specific subset of aggressive maneuvers that would not occur in pure HDV traffic. Proposition \ref{prop:exploit_prob} mathematically validates the existence of this ``exploitation window.''

\begin{proposition}[Existence of Exploitative Cut-in Probability]
\label{prop:exploit_prob}
Within the framework of Assumptions 1-4, the probability of an HDV performing a strategic cut-in specifically by exploiting the CAV's yielding policy is strictly positive.
\end{proposition}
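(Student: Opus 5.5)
To prove Proposition~\ref{prop:exploit_prob}, I will identify the ``exploitative'' event precisely as the joint occurrence of three independent ingredients: the Ego HDV is an HDV, the target vehicle is a CAV, and the available gap $g$ lies in the contestable window $(g_{\text{crit}}^{\text{cut,CAV}}, g_{\text{crit}}^{\text{reg}})$. Under the random-mixing assumption used in the bridge (Eq.~\eqref{eq:epsilon_bridge}), the probability of this event factorizes as
\begin{equation*}
P_{\text{exploit}}(\rho) = m_{\text{H}} m_{\text{C}} \, P_{\text{HDV}\to\text{CAV}}^{\text{cut}}(\rho) = m_{\text{H}} m_{\text{C}} \left( e^{-\lambda g_{\text{crit}}^{\text{cut,CAV}}} - e^{-\lambda g_{\text{crit}}^{\text{reg}}} \right),
\end{equation*}
so the proof reduces to showing that each factor is strictly positive for every $\rho>0$ in the admissible range $0<\rho<1/L_{\text{veh}}$.

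The steps, in order, are as follows. First, the composition factor $m_{\text{H}} m_{\text{C}} > 0$ follows directly from Assumption~3(1). Second, I will justify that the window is the correct one. By Proposition~\ref{prop:threshold_optimality}, the cut-in decision against a CAV is a threshold rule at $g_{\text{crit}}^{\text{cut,CAV}}$, so every $g$ in the window yields $D_{\text{cut}}=1$. Because $g<g_{\text{crit}}^{\text{reg}}$, the maneuver is by definition $\mathcal{E}_{\text{strat}}$, i.e.\ it forces the CAV to brake. Assumption~2 (the $\pi_{\text{safe}}$ yielding policy) is what makes $\Psi_{\text{conflict}}\approx 0$ there, and this is exactly the sense in which the cut-in \emph{exploits} the yielding policy. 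Third, the window is non-degenerate by Assumption~3(2): $0<g_{\text{crit}}^{\text{cut,CAV}}<g_{\text{crit}}^{\text{reg}}$. Fourth, with $\lambda=\rho/(1-\rho L_{\text{veh}})>0$ from Assumption~3(3), the map $g\mapsto e^{-\lambda g}$ is strictly decreasing. Hence the difference of exponentials is strictly positive, or equivalently the integral of a strictly positive density over an interval of positive length is positive.

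The main obstacle is conceptual rather than computational. ``Specifically by exploiting'' should mean gaps that are accepted against a CAV but would be rejected against an HDV, i.e.\ the interval $(g_{\text{crit}}^{\text{cut,CAV}}, g_{\text{crit}}^{\text{cut,HDV}})$. Assumption~3 alone does not give $g_{\text{crit}}^{\text{cut,CAV}}<g_{\text{crit}}^{\text{cut,HDV}}$. To cover this stronger reading, I will derive that inequality from the type dependence of $\Psi_{\text{conflict}}$. Under $\pi_{\text{safe}}$ the penalty vanishes at gaps where, under $\pi_{\text{reactive}}$, it equals its prohibitive value. Therefore $\Delta U_{\text{CAV}}(g)\ge \Delta U_{\text{HDV}}(g)$ pointwise, with strict inequality on a set of positive length near the HDV threshold. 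Combining this with the strict monotonicity from Proposition~\ref{prop:threshold_optimality}, the roots must satisfy $g_{\text{crit}}^{\text{cut,CAV}}<g_{\text{crit}}^{\text{cut,HDV}}$. With $g^\star=\min(g_{\text{crit}}^{\text{cut,HDV}}, g_{\text{crit}}^{\text{reg}})$, the same exponential argument then gives
\begin{equation*}
m_{\text{H}}m_{\text{C}}\left(e^{-\lambda g_{\text{crit}}^{\text{cut,CAV}}}-e^{-\lambda g^\star}\right)>0,
\end{equation*}
which establishes strict positivity of the purely exploitative component as well.
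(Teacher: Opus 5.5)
Your core argument is the paper's proof: integrate the exponential density over $(g_{\text{crit}}^{\text{cut,CAV}}, g_{\text{crit}}^{\text{reg}})$, then use $\lambda>0$, the strict ordering of the two thresholds, and the strict decrease of $e^{-\lambda g}$. You even cite the ordering correctly as Assumption 3(2), where the paper writes ``Assumption 2''. Your restriction $\rho<1/L_{\text{veh}}$ is a correct sharpening.

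There is one bookkeeping difference. You build $m_{\text{H}} m_{\text{C}}$ into $P_{\text{exploit}}$. The paper keeps $P_{\text{exploit}}$ as the window probability alone and multiplies by $m_{\text{H}} m_{\text{C}}$ separately, as the Opportunity Factor in Proposition~\ref{prop:marginal_intensity} and in the simulation's update for $\epsilon_j$. With your definition those later formulas would count the pairing twice, so keep the factor outside.

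The extra step deriving $g_{\text{crit}}^{\text{cut,CAV}}<g_{\text{crit}}^{\text{cut,HDV}}$ is not needed for the proposition as the paper proves it. The paper only asserts this inequality in Section~\ref{sec:bridge} and supports it numerically in Section~\ref{sec:micro_validation}. As written, your step has a genuine gap.

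Pointwise dominance $\Delta U_{\text{CAV}}\ge\Delta U_{\text{HDV}}$ only gives inclusion of the acceptance sets, hence $g_{\text{crit}}^{\text{cut,CAV}}\le g_{\text{crit}}^{\text{cut,HDV}}$. Strict dominance on a left neighbourhood of the HDV threshold does not upgrade this, because it says nothing about the sign of $\Delta U_{\text{CAV}}$ there. Here is a concrete counterexample:
\begin{itemize}
\item Let $h(g)$ be the penalty-free part of $\Delta U$ (motivation minus effort minus $U_{\text{stay}}$), strictly increasing with $h(g_0)=0$.
\item Against an HDV, let the penalty be $\Psi=\mathbb{I}(g<g_s)$ with $g_s=g_0$; against a CAV, let it vanish.
\item Then $\Delta U_{\text{CAV}}>\Delta U_{\text{HDV}}$ on all of $(0,g_0)$, yet both thresholds equal $g_0$ and the window is empty.
\end{itemize}

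The missing hypothesis is that the HDV threshold is safety-limited, i.e.\ $g_0<g_s$: the penalty-free utility must already be positive at gaps where the HDV opponent's penalty is still active. Also, with a step penalty the thresholds are infima of acceptance sets rather than zeros. The appeal to ``roots'' from Proposition~\ref{prop:threshold_optimality} should therefore be phrased via those sets. Either state $g_0<g_s$ as an explicit assumption (the analogue of Assumption 3(2) for $g_{\text{crit}}^{\text{cut,HDV}}$), or drop the stronger claim.
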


\begin{proof}
The game-theoretic cut-in is defined by the interval $(g_{\text{crit}}^{\text{cut,CAV}}, g_{\text{crit}}^{\text{reg}})$. This represents maneuvers that would be rejected under regular lane-changing logic (where $g < g_{\text{crit}}^{\text{reg}}$) but are accepted due to the strategic game against a CAV. Integrating the gap distribution over this region yields:
\begin{equation}
    P_{\text{exploit}}(\rho) = \int_{g_{\text{crit}}^{\text{cut,CAV}}}^{g_{\text{crit}}^{\text{reg}}} \lambda e^{-\lambda g} \,dg = e^{-\lambda g_{\text{crit}}^{\text{cut,CAV}}} - e^{-\lambda g_{\text{crit}}^{\text{reg}}}
\end{equation}
By Assumption 3, the arrival rate $\lambda > 0$. By Assumption 2, the game theoretic solution yields a critical gap strictly smaller than the regular gap ($g_{\text{crit}}^{\text{cut,CAV}} < g_{\text{crit}}^{\text{reg}}$). Since the exponential function $e^{-x}$ is strictly decreasing, the term $e^{-\lambda g_{\text{crit}}^{\text{cut,CAV}}}$ is strictly larger than $e^{-\lambda g_{\text{crit}}^{\text{reg}}}$, rendering the probability $P_{\text{exploit}}(\rho)$ strictly positive.
\end{proof}

\noindent \textit{Remark:} This result confirms that the phenomenon of ``phantom jams'' in mixed traffic is not merely a stochastic occurrence but a structural inevitability. As long as CAVs are programmed to be more conservative than the average human driver ($g_{\text{crit,CAV}} > g_{\text{crit,HDV}}$), a non-zero probability of exploitation $P_{\text{exploit}}$ is guaranteed to exist.

\subsubsection*{Quantification of Induced Friction}
Finally, we decompose the macroscopic impact of these microscopic games. Proposition \ref{prop:marginal_intensity} separates the lane-changing intensity into distinct components, isolating the specific contribution of strategic gaming.

\begin{proposition}[Decomposition of Marginal Intensity]
\label{prop:marginal_intensity}
The increase in lane-changing intensity due to gaming, denoted as the marginal intensity $\Delta \epsilon_{\text{game}}$, is the product of the Aggression Factor (the expanded probability window) and the Opportunity Factor (the frequency of HDV-CAV pairings).
\end{proposition}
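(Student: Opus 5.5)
The plan is to define the marginal intensity concretely and then compute it directly from the bridge formula Eq.~(\ref{eq:epsilon_bridge}). I would define $\Delta \epsilon_{\text{game}}$ as the difference between the actual cut-in intensity $\epsilon_{\text{HDV}}^{\text{cut}}(\rho)$ and a counterfactual cut-in intensity. In the counterfactual, every target vehicle, CAV or HDV, is treated by the Ego HDV as non-yielding. Equivalently, the HDV applies the critical gap $g_{\text{crit}}^{\text{cut,HDV}}$ to every pairing. This isolates exactly the extra maneuvers enabled by the CAV's predictable policy $\pi_{\text{safe}}$ (Assumption 2). The regular-lane-change terms $\epsilon_{\text{HDV}}^{\text{reg}}$ and $\epsilon_{\text{CAV}}^{\text{reg}}$ do not depend on the opponent's strategy. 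They therefore cancel in the difference, so only the HDV$\to$CAV term in Eq.~(\ref{eq:epsilon_bridge}) changes.

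Next I would write out the difference. In the counterfactual, the HDV$\to$CAV term carries $e^{-\lambda g_{\text{crit}}^{\text{cut,HDV}}} - e^{-\lambda g_{\text{crit}}^{\text{reg}}}$. In the actual model it carries $e^{-\lambda g_{\text{crit}}^{\text{cut,CAV}}} - e^{-\lambda g_{\text{crit}}^{\text{reg}}}$. Subtracting, the $g_{\text{crit}}^{\text{reg}}$ terms cancel and
\begin{equation*}
\Delta \epsilon_{\text{game}}(\rho) = \frac{v(\rho)\, t_{\text{LC}}^{\text{cut}}}{L} \cdot \underbrace{\left(e^{-\lambda g_{\text{crit}}^{\text{cut,CAV}}} - e^{-\lambda g_{\text{crit}}^{\text{cut,HDV}}}\right)}_{\text{Aggression Factor}} \cdot \underbrace{m_{\text{H}} m_{\text{C}}}_{\text{Opportunity Factor}} .
\end{equation*}
I would then interpret the first bracket as the probability mass $\int_{g_{\text{crit}}^{\text{cut,CAV}}}^{g_{\text{crit}}^{\text{cut,HDV}}} \lambda e^{-\lambda g}\,dg$ of the expanded window, which is the Aggression Factor. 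The second factor is the random-mixing frequency of HDV--CAV pairings, which is the Opportunity Factor. The common prefactor $v(\rho)t_{\text{LC}}^{\text{cut}}/L$ is Jin's conversion from event probability to intensity. It is the same for every cut-in term, so it does not alter the factorization.

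To finish, I would check positivity and well-definedness. By Assumption 3(3), $\lambda > 0$. The ordering $g_{\text{crit}}^{\text{cut,CAV}} < g_{\text{crit}}^{\text{cut,HDV}}$ was derived in Section~\ref{sec:bridge}, and each critical gap is unique by Proposition~\ref{prop:threshold_optimality}. Together these make the Aggression Factor strictly positive, by the same monotonicity argument as in Proposition~\ref{prop:exploit_prob}. Assumption 3(1) gives $m_{\text{H}} m_{\text{C}} > 0$, so $\Delta\epsilon_{\text{game}} > 0$.

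The main obstacle is the modeling choice of the counterfactual baseline. If one instead takes $g_{\text{crit}}^{\text{reg}}$ as the baseline, the Aggression Factor becomes $P_{\text{exploit}}$ from Proposition~\ref{prop:exploit_prob}. Doing so double-counts the HDV-vs-HDV contested window, which is not caused by gaming. I would therefore state the HDV-opponent baseline explicitly and note that the alternative baseline yields the same product structure with $P_{\text{exploit}}$ in place of the Aggression Factor. A secondary subtlety is that $t_{\text{LC}}(g)$ varies with $g$. The clean product form relies on the approximation $t_{\text{LC}}(g) \approx t_{\text{LC}}^{\text{cut}}$ across the window, as already adopted in Eq.~(\ref{eq:epsilon_bridge}), so I would invoke that approximation and not try to prove it.
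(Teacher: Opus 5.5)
Your computation is sound, but you chose a different counterfactual from the paper's, so your Aggression Factor is not the paper's.

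The paper's baseline resolves HDV--CAV encounters with the regular threshold $g_{\text{crit}}^{\text{reg}}$. That gives $\Delta\epsilon_{\text{game}} = m_{\text{H}} m_{\text{C}}\, P_{\text{exploit}}\, t_{\text{LC}} v(\rho)/L$, with the Aggression Factor literally equal to $P_{\text{exploit}}$. Positivity then follows directly from Proposition~\ref{prop:exploit_prob}, using only the ordering $g_{\text{crit}}^{\text{cut,CAV}} < g_{\text{crit}}^{\text{reg}}$ from Assumption 3. This is also the quantity the paper carries into the simulation ($\alpha P_{\text{exploit}} m_{\text{H}} m_{\text{C}}$).

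Your baseline has the HDV treat a CAV as a non-yielding HDV. This buys a sharper attribution to the CAV's yielding policy. It also makes ``only the HDV$\to$CAV term changes'' exact. By contrast, the paper's ``all vehicles follow regular logic,'' read literally, would also remove the $m_{\text{H}}^2$ HDV-vs-HDV term, which its displayed difference silently omits.

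Your choice has two costs.
\begin{itemize}
\item Positivity now rests on $g_{\text{crit}}^{\text{cut,CAV}} < g_{\text{crit}}^{\text{cut,HDV}}$. Section~\ref{sec:bridge} asserts this, but it is not among Assumptions 1--4.
\item Your formula tacitly needs $g_{\text{crit}}^{\text{cut,HDV}} \le g_{\text{crit}}^{\text{reg}}$. Otherwise the counterfactual window $(g_{\text{crit}}^{\text{cut,HDV}}, g_{\text{crit}}^{\text{reg}})$ is empty and its probability is $0$, not the negative exponential difference. The correct factor is then $P_{\text{exploit}}$, not your expression.
\end{itemize}
The paper's macroscopic configuration effectively sets $g_{\text{crit}}^{\text{cut,HDV}} = g_{\text{crit}}^{\text{reg}} = 30$ m (Section~\ref{sec:micro_validation}), so the two Aggression Factors coincide there.

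Finally, ``double-counting'' overstates your objection to the paper's choice. In the paper's usage, the sub-window $(g_{\text{crit}}^{\text{cut,HDV}}, g_{\text{crit}}^{\text{reg}})$ is still gaming, just not CAV-specific exploitation. The issue is which effect it is attributed to, not that anything is counted twice.
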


\begin{proof}
Let $\epsilon_{\text{baseline}}$ represent the intensity if all vehicles followed the regular lane-changing logic (using $g_{\text{crit}}^{\text{reg}}$), and $\epsilon_{\text{total}}$ represent the intensity with the game model. The difference is derived from the cut-in component:
\begin{equation}
\Delta \epsilon_{\text{game}}(\rho) = \epsilon_{\text{total}}(\rho) - \epsilon_{\text{baseline}}(\rho)
\end{equation}
Substituting the probability terms derived in the Micro-Macro bridge:
\begin{equation}
\Delta \epsilon_{\text{game}}(\rho) = \underbrace{ (m_{\text{H}} m_{\text{C}}) }_{\text{Opportunity Factor}} \cdot \underbrace{ \left( e^{-\lambda g_{\text{crit}}^{\text{cut,CAV}}} - e^{-\lambda g_{\text{crit}}^{\text{reg}}} \right) }_{\text{Aggression Factor}} \cdot \frac{t_{\text{LC}} v(\rho)}{L}
\end{equation}
Here, the Opportunity Factor ($m_{\text{H}} m_{\text{C}}$) quantifies the probability of an HDV trailing a CAV, which creates the necessary condition for the game. The Aggression Factor corresponds to $P_{\text{exploit}}$, representing the additional volume of lane changes generated because the HDV is willing to accept smaller gaps ($g_{\text{crit}}^{\text{cut,CAV}}$) than usual. Since $m_{\text{H}}, m_{\text{C}} > 0$ and $P_{\text{exploit}} > 0$ (from Proposition \ref{prop:exploit_prob}), the marginal intensity is strictly positive: $\Delta \epsilon_{\text{game}} > 0$.
\end{proof}

\noindent \textit{Remark:} This decomposition highlights the non-linear relationship between penetration rate and flow stability. The friction is maximized not when CAVs are dominant, but when the product $m_H m_C$ is maximized at a 50\% penetration rate, and the opportunity for heterogeneous interaction is highest.

\subsection{Degradation of Macroscopic Efficiency}
We quantify the systemic impact of strategic cut-ins on the steady-state performance metrics of the traffic network. By integrating the game-theoretic friction term into the macroscopic conservation law, we analyze how the equilibrium state of the traffic stream is altered by the presence of exploitative actors. We demonstrate that the microscopic behavioral asymmetry (i.e., HDV's exploitation of CAV defensiveness) does not merely create local turbulence, but translates directly into a strict contraction of the FD. The following theorem rigorously establishes the degradation of three critical efficiency metrics: maximum flow capacity, equilibrium speed, and total system travel time.

\begin{theorem}[Macroscopic Efficiency Degradation]
\label{thm:efficiency}
The presence of game-theoretic cut-ins, quantified by the strictly positive marginal intensity $\Delta \epsilon_{\text{game}}$, strictly degrades the macroscopic efficiency of the traffic stream. Specifically, for any traffic state with density $\rho > 0$, the following conditions hold strictly:
\begin{enumerate}
    \item Capacity Reduction: The fundamental flow-density curve is pointwise suppressed, leading to a strictly lower maximum capacity $q_{\max}$.
    \item Speed Reduction: The equilibrium traffic speed $v(\rho)$ is strictly reduced compared to the baseline.
    \item Travel Time Increase: The total system travel time $T$ is strictly increased.
\end{enumerate}
\end{theorem}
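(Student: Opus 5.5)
The plan is to compare the two effective densities pointwise and then let monotonicity of $V$ carry the comparison through to each metric. Fix $\rho>0$ and write $\epsilon_{\text{total}}(\rho)=\epsilon_{\text{baseline}}(\rho)+\Delta\epsilon_{\text{game}}(\rho)$. By Proposition \ref{prop:marginal_intensity}, $\Delta\epsilon_{\text{game}}(\rho)>0$ whenever $\rho>0$; this uses $m_{\text{H}}m_{\text{C}}>0$ and $P_{\text{exploit}}>0$ from Proposition \ref{prop:exploit_prob}.

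One subtlety needs attention before anything else. $\Delta\epsilon_{\text{game}}$ carries the factor $v(\rho)$, which is itself determined by $\epsilon$. I would treat this factor as evaluated at the baseline speed, or more generally assume the fixed point $v=V((1+\epsilon(v))\rho)$ is well defined. Either way, positivity of $\Delta\epsilon_{\text{game}}$ only needs $v(\rho)>0$, which holds for $\rho$ below jam density. Granting this, the effective densities satisfy
\[
\bar\rho_{\text{total}}=(1+\epsilon_{\text{total}})\rho>(1+\epsilon_{\text{baseline}})\rho=\bar\rho_{\text{baseline}}.
\]
Assumption \ref{assum:flow_dynamics} says $V$ is strictly decreasing, so $v_{\text{total}}(\rho)<v_{\text{baseline}}(\rho)$. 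This is item 2.

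For item 1, multiply the speed inequality by $\rho>0$ to get $q_{\text{total}}(\rho)<q_{\text{baseline}}(\rho)$ for every $\rho>0$; this is the pointwise suppression of the fundamental diagram. To pass from pointwise to strict inequality of the maxima, I would take $\rho^*$ to be a maximizer of $q_{\text{total}}$ on the compact interval $[0,\rho_{\text{jam}}]$. This requires continuity of $V$ and of $\epsilon(\rho)$; the latter holds because $\epsilon$ is built from exponentials in $\lambda(\rho)$. I would then argue that $\rho^*\in(0,\rho_{\text{jam}})$, since $q=0$ at both endpoints, and conclude
\[
q^{\text{total}}_{\max}=q_{\text{total}}(\rho^*)<q_{\text{baseline}}(\rho^*)\le q^{\text{baseline}}_{\max}.
\]
I expect this step to be the main obstacle. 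Pointwise strict inequality alone does not give a strict inequality of suprema, so the argument hinges on attainment of the maximum. That in turn forces me to state the continuity and jam-density regularity that the paper leaves implicit.

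For item 3, I would define total system travel time over a road segment of length $\ell$ and horizon $[0,\mathcal T]$ as
\[
T=\int_0^{\mathcal T}\!\int_0^\ell \rho(x,t)\,dx\,dt,
\]
or per vehicle as $\ell/v$. In a steady-state comparison at the same demand density, the per-vehicle time $\ell/v(\rho)$ is strictly larger because $v$ is strictly smaller. For a fixed demand flow $q_d$ below both capacities, the uncongested-branch density solving $q(\rho)=q_d$ is strictly larger under $\epsilon_{\text{total}}$, since that curve lies strictly below the baseline. Hence the density integral $T$ strictly increases. I would present the steady-state, fixed-demand version and note that it follows directly from items 1 and 2.
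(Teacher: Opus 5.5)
Your proposal is correct and takes the same route as the paper. The paper also uses $\Delta\epsilon_{\text{game}}>0$ to get $\bar\rho_{\text{total}}>\bar\rho_{\text{base}}$, then strict monotonicity of $V$ for the speed reduction, then multiplication by $\rho$ for pointwise suppression of $q$, and reads travel time off $\ell/v(\rho)$.

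What you add is rigor the paper omits, not a new method:
\begin{itemize}
\item The paper infers $q_{\max}^{\text{total}}<q_{\max}^{\text{base}}$ directly from pointwise suppression. Your evaluation at a maximizer $\rho^*>0$ of $q_{\text{total}}$ is exactly what justifies that step.
\item Your fixed-demand vehicle-hours comparison is closer to what ``total system travel time'' usually means than the paper's fixed-density $L/v(\rho)$.
\item One caution about your fixed-point reading of $v$ inside $\epsilon$. There the baseline part of $\epsilon$ is evaluated at different speeds in the two scenarios. So ``either way'' overstates things: $\bar\rho_{\text{total}}>\bar\rho_{\text{base}}$ then needs the monotone dependence of the root of $v-V((1+\epsilon(v))\rho)$ on the friction coefficient, not positivity of $\Delta\epsilon_{\text{game}}$ alone.
\end{itemize}
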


\begin{proof}
We prove these properties sequentially by analyzing the modified conservation law $q(\rho) = \rho V(\rho (1 + \epsilon_{\text{total}}))$.

\textit{Part 1: Capacity and Flow Suppression.}
Let $q_{\text{base}}(\rho)$ be the flow in the absence of gaming. We compare the effective densities governing the two states:
\begin{equation}
    \bar{\rho}_{\text{total}} = \rho (1 + \epsilon_{\text{base}} + \Delta \epsilon_{\text{game}}) \quad \text{vs.} \quad \bar{\rho}_{\text{base}} = \rho (1 + \epsilon_{\text{base}})
\end{equation}
By Proposition \ref{prop:marginal_intensity}, the marginal intensity $\Delta \epsilon_{\text{game}}$ is strictly positive. Consequently, the effective density in the game scenario is strictly higher: $\bar{\rho}_{\text{total}} > \bar{\rho}_{\text{base}}$.
Since the speed function $V(\cdot)$ is strictly decreasing (Assumption \ref{assum:flow_dynamics}), the equilibrium velocity must decrease:
\begin{equation}
    V(\bar{\rho}_{\text{total}}) < V(\bar{\rho}_{\text{base}})
\end{equation}
Multiplying by the physical density $\rho$, we find that the flow curve is pointwise suppressed for all $\rho > 0$:
\begin{equation}
    q_{\text{total}}(\rho) < q_{\text{base}}(\rho)
\end{equation}
Since the entire flow function is suppressed, its maximum value (i.e., capacity) is strictly reduced: $q_{\max}^{\text{total}} < q_{\max}^{\text{base}}$. This confirms that capacity loss is not random, but is a deterministic function of the HDV's aggressiveness (lowering $g_{\text{crit}}^{\text{cut}}$) and the CAV market penetration ($m_C$).

\textit{Part 2: Speed Reduction.}
The average speed is given by $v(\rho) = V(\bar{\rho})$. We decompose the total effective density into the baseline component and the game-induced turbulence:
\begin{equation}
    \bar{\rho}_{\text{total}} = \bar{\rho}_{\text{base}} + \underbrace{\rho \cdot \Delta \epsilon_{\text{game}}(\rho)}_{\text{Virtual Density}}
\end{equation}
The term $\rho \cdot \Delta \epsilon_{\text{game}}$ represents the physical density of ``virtual'' vehicles added by the negotiation friction. Since this term is strictly positive, the system behaves as if it is more crowded than it physically is. By the monotonicity of $V$, it follows immediately that $v_{\text{total}}(\rho) < v_{\text{base}}(\rho)$.

\textit{Part 3: Travel Time.}
The total travel time over segment $L$ is defined as $T(\rho) = L / v(\rho)$. From Part 2, we can express the game-theoretic speed as the baseline speed minus a strictly positive velocity loss term $\delta_v(\Delta \epsilon_{\text{game}})$:
\begin{equation}
    v_{\text{total}}(\rho) = v_{\text{base}}(\rho) - \delta_v
\end{equation}
Substituting this into the travel time equation:
\begin{equation}
    T_{\text{total}}(\rho) = \frac{L}{v_{\text{base}}(\rho) - \delta_v} > \frac{L}{v_{\text{base}}(\rho)} = T_{\text{base}}(\rho)
\end{equation}
The inequality holds strictly because the denominator is reduced by the cumulative friction of exploitative cut-ins, mathematically guaranteeing an increase in system travel time.
\end{proof}

\noindent \textit{Remark:} This theorem establishes that the observed capacity degradation is not a stochastic artifact, but a deterministic structural property of mixed-autonomy traffic. The marginal intensity $\Delta \epsilon_{\text{game}}$ functions effectively as a ``frictional multiplier'' that inflates the physical density $\rho$ into a higher effective density $\bar{\rho}$. Physically, this implies that turbulence generated by the negotiations consumes road capacity just like adding extra vehicles into the stream, making it perceived by the driver to be significantly more congested than the physical vehicle count would suggest.

\subsection{Instability and Congestion Dynamics}
Having established the steady-state efficiency losses in Theorem \ref{thm:efficiency}, we now address the dynamic stability of the traffic stream. While capacity reduction describes the limit of the system, it does not explain the mechanism of breakdown. In this subsection, we rigorously analyze how the spatial inhomogeneity of game-theoretic interactions, specifically the transition into the weaving zone, acts as an active destabilizing force. We begin by defining the perturbation source term derived from the linearized kinematic wave model \cite{jin2010kinematic}, which identifies the specific mathematical pathway through which microscopic behavioral friction triggers macroscopic density nucleation.

\begin{definition}[Perturbation, $\mathcal{P}$]
The perturbation $\mathcal{P}$ is the source term in the linearized density transport equation $\rho_t + \lambda \rho_x = \mathcal{P}$. Physically, it represents the rate of density accumulation caused strictly by lane-changing gradients, defined as:
\begin{equation}
    \mathcal{P} = - \rho^2 V' \frac{\partial \epsilon}{\partial x}
\end{equation}
where $\rho$ is the traffic density, $V'$ is the derivative of the speed-density function ($V' < 0$), and $\frac{\partial \epsilon}{\partial x}$ is the spatial gradient of the lane-changing intensity \cite{jin2010kinematic}. In standard LWR traffic flow, this term is zero ($\mathcal{P} = 0$). A non-zero perturbation $\mathcal{P} > 0$ implies an external forcing term acts on the traffic stream to artificially accelerate density accumulation.
\end{definition}

\begin{theorem}[Destabilization of Traffic Flow]
\label{thm:instability}
The spatial gradients of game-induced friction act as destabilizing forces within the kinematic wave model. This results in three strictly negative stability outcomes:
\begin{enumerate}
    \item Forced Compression: The gradient of $\epsilon$ generates a strictly positive density perturbation $\mathcal{P} > 0$. This forces the temporal rate of density change to be positive ($\frac{\partial \rho}{\partial t} > 0$) even in a geometrically homogeneous road section where the convective gradient is zero.
    \item Premature Congestion: The critical density $\rho_{\text{crit}}$ required to trigger traffic jams is strictly reduced.
    \item Aggravated Shockwaves: Once congestion forms, the backward propagation speed $|w|$ of the shockwave is strictly increased.
\end{enumerate}
\end{theorem}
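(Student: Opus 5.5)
I will prove the three claims in order, working from the modified conservation law $\rho_t + \partial_x\bigl(\rho V((1+\epsilon)\rho)\bigr)=0$ of Assumption \ref{assum:flow_dynamics} together with the bridge expression for $\epsilon$.

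\textbf{Part 1 (forced compression).} I would expand the flux derivative by the chain rule:
\begin{equation*}
\partial_x q = \bigl[V(\bar\rho) + \rho(1+\epsilon)V'(\bar\rho)\bigr]\rho_x + \rho^2 V'(\bar\rho)\,\epsilon_x .
\end{equation*}
This identifies the characteristic speed $\lambda = V + \rho(1+\epsilon)V'$ and the source term $\mathcal{P} = -\rho^2 V'\epsilon_x$, consistent with the Definition above. The step that needs care is the sign of $\epsilon_x$. At the entrance of a weaving or interaction zone, $\epsilon$ rises from $\epsilon_{\text{base}}$ to $\epsilon_{\text{base}}+\Delta\epsilon_{\text{game}}$. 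By Proposition \ref{prop:marginal_intensity}, $\Delta\epsilon_{\text{game}}>0$, so on the transition region $\epsilon_x>0$. Since $V'<0$ and $\rho>0$, this gives $\mathcal{P}>0$. In a homogeneous section with $\rho_x=0$, the equation reduces to $\rho_t=\mathcal{P}>0$, which is the forced-compression claim. I will state explicitly that the claim is made on the upstream-to-inside transition of the exploitation zone, where $\epsilon_x>0$. The main obstacle is that $\epsilon_x>0$ is not implied by Proposition \ref{prop:marginal_intensity} pointwise, so I must add a spatial modeling hypothesis: the interaction intensity increases in the direction of travel into the zone.

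\textbf{Part 2 (premature congestion).} I would define $\rho_{\text{crit}}$ as the maximizer of $q(\rho)=\rho V((1+\epsilon)\rho)$. For constant $\epsilon$, the substitution $\bar\rho=(1+\epsilon)\rho$ gives $q = \frac{1}{1+\epsilon}\bar\rho V(\bar\rho)$. Hence $\rho_{\text{crit}}=\bar\rho^*/(1+\epsilon)$, where $\bar\rho^*$ is the fixed maximizer of $\bar\rho V(\bar\rho)$, and the capacity is $q^*/(1+\epsilon)$. Since $\epsilon_{\text{total}}=\epsilon_{\text{base}}+\Delta\epsilon_{\text{game}}$ with $\Delta\epsilon_{\text{game}}>0$ (Proposition \ref{prop:marginal_intensity}), it follows that $\rho_{\text{crit}}^{\text{total}}<\rho_{\text{crit}}^{\text{base}}$, and this is consistent with Theorem \ref{thm:efficiency}. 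There is a caveat: $\epsilon$ depends on $\rho$ through $\lambda$ and $v(\rho)$. I would first treat $\epsilon$ as locally constant, which is the frozen-coefficient approximation used by Jin \cite{jin2010kinematic}. If time permits, I would then argue via the first-order condition $\partial_\rho q=0$ that the extra term $\rho^2V'\epsilon'(\rho)$ does not reverse the comparison when $\Delta\epsilon_{\text{game}}$ is small.

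\textbf{Part 3 (aggravated shockwaves).} I would use the congested-branch wave speed $w=\lambda=V(\bar\rho)+\bar\rho V'(\bar\rho)$ in the triangular or Greenshields setting. Under Greenshields, $V(\bar\rho)=v_f(1-\bar\rho/\rho_j)$, so $w=v_f(1-2\bar\rho/\rho_j)$. This is strictly decreasing in $\bar\rho$, and therefore $|w|$ strictly increases on the congested branch as $\bar\rho$ grows. For a general concave flux, I would use $\partial w/\partial\bar\rho = 2V'+\bar\rho V''<0$, which is equivalent to strict concavity of $\bar\rho V(\bar\rho)$. Since $\bar\rho_{\text{total}}>\bar\rho_{\text{base}}$ at any fixed physical $\rho$ (Part 1 of Theorem \ref{thm:efficiency}), $|w|$ increases.

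I expect the main difficulty to be Part 3. Under a strictly triangular fundamental diagram the wave speed in $\bar\rho$ is constant, while the physical backward wave speed scales like $-w_0(1+\epsilon)$-type factors. In that case I would fall back on writing $w$ in physical density, $w=\frac{d}{d\rho}\bigl[\frac{1}{1+\epsilon}g((1+\epsilon)\rho)\bigr]=g'(\bar\rho)$, and check the sign using the strict concavity assumption, which I would add to Assumption \ref{assum:flow_dynamics} if needed.
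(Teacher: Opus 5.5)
Your Parts 1 and 2 match the paper's argument. Part 1 is the sign of $\mathcal{P}=-\rho^2V'\epsilon_x$ at the entrance of the friction zone, where you, like the paper, must simply posit $\epsilon_x>0$. Part 2 is the frozen-$\epsilon$ identity $(1+\epsilon)\rho_{\text{crit}}=\bar\rho^*$, which is the paper's $\rho_{\text{total}}=\rho_{\text{base}}/(1+\epsilon)$.

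The gap is in Part 3. You compare the characteristic speed $dq/d\rho=g'(\bar\rho)$, with $g(\bar\rho)=\bar\rho V(\bar\rho)$, at a fixed physical density inside the friction zone. The statement, however, concerns the shock between the upstream demand state and the queue. Its speed is the Rankine--Hugoniot chord slope $w=(q_q-q_{\text{in}})/(\rho_q-\rho_{\text{in}})$ between two different states, so the two quantities are different objects. In the paper's bottleneck setting the queue forms upstream of the weaving section, where $\epsilon=0$ in both scenarios. Your fixed-density comparison therefore predicts no change there at all; friction acts only through the reduced discharge of the weaving section. Your route also fails in exactly the case you flagged. For a triangular diagram the modified congested branch is $C_j\bigl(\rho_j/(1+\epsilon)-\rho\bigr)$, so $dq/d\rho=-C_j$ for every $\epsilon$. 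Your remark that the physical wave speed scales with $(1+\epsilon)$ is therefore incorrect. The fallback $w=g'(\bar\rho)$ is the same quantity, so it gives no strict increase. Adding strict concavity to Assumption~\ref{assum:flow_dynamics} would be a hypothesis that the correct argument does not need.

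The missing idea is the paper's supply mechanism. By Jin's Riemann solutions (Types 4 and 9), the weaving section discharges at most $Q_{\max}^{\text{base}}/(1+\epsilon)$. For a fixed upstream demand $(\rho_{\text{in}},q_{\text{in}})$ exceeding this discharge, the flux deficit therefore grows with $\epsilon$. The paper then holds $\Delta\rho$ fixed, which is loose because the queue density also moves; you can close this cleanly:
\begin{itemize}
\item The queue density $\rho_q$ is the congested-branch solution of $q_{\text{base}}(\rho_q)=Q_{\max}^{\text{base}}/(1+\epsilon)$, so it is strictly increasing in $\epsilon$.
\item The speed $w$ is the slope of the chord of the concave $q_{\text{base}}$ from the fixed endpoint $(\rho_{\text{in}},q_{\text{in}})$ to $(\rho_q,q_{\text{base}}(\rho_q))$.
\item Chord slopes of a concave function from a fixed left endpoint are nonincreasing in the right endpoint. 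Here they are strictly decreasing, because $[\rho_{\text{in}},\rho_q]$ contains the peak of the diagram, so $q_{\text{base}}$ is not affine on it.
\end{itemize}
Hence $w$ becomes strictly more negative and $|w|$ strictly increases, and this holds for the triangular diagram as well.
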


\begin{proof}
We prove these properties based on the eigenvalues and Riemann solutions of the system.

\textit{Part 1: Forced Compression (Perturbation).}
We analyze the linearized conservation law derived in \cite{jin2010kinematic}. By isolating the transport equation for density, we obtain:
\begin{equation}
    \underbrace{\frac{\partial \rho}{\partial t} + \lambda_1 \frac{\partial \rho}{\partial x}}_{\text{Convection Rate}} = \underbrace{- \rho^2 V' \frac{\partial \epsilon}{\partial x}}_{\text{Perturbation } (\mathcal{P})}
\end{equation}
To determine the stability impact, we analyze the sign of the perturbation term $\mathcal{P}$ at the onset of the weaving section:
1) The squared density $\rho^2$ is strictly positive for any existing flow.
2) The speed gradient $V'$ is strictly negative in stable traffic ($V' < 0$).
3) The friction gradient at the entrance of the game zone intensity increases, so $\frac{\partial \epsilon}{\partial x} > 0$.

Combining these factors:
\begin{equation}
    \mathcal{P} = (-1) \cdot \underbrace{(\rho^2)}_{+} \cdot \underbrace{(V')}_{-} \cdot \underbrace{\left(\frac{\partial \epsilon}{\partial x}\right)}_{+} > 0
\end{equation}
Consequently, even if the incoming traffic is spatially uniform ($\frac{\partial \rho}{\partial x} = 0$), the equation forces $\frac{\partial \rho}{\partial t} > 0$. This proves that the game-theoretic friction acts as a source term, creating a localized compression zone solely due to behavioral heterogeneity.

\textit{Part 2: Premature Congestion (Critical Density).}
The critical density is defined as the unique state where the characteristic wave speed $\lambda_1$ vanishes, marking the transition from free-flow to congestion. Using the eigenvalue derived in \cite{jin2010kinematic}:
\begin{equation}
    \lambda_1(\rho) = V(\bar{\rho}) + \rho(1+\epsilon)V'(\bar{\rho})
\end{equation}
where $\bar{\rho} = (1+\epsilon)\rho$. We compare the equilibrium states:
\begin{itemize}
    \item Baseline ($\epsilon=0$): The zero-crossing occurs at $\rho_{\text{base}}$ such that $V(\rho_{\text{base}}) + \rho_{\text{base}} V'(\rho_{\text{base}}) = 0$.
    \item Game Scenario ($\epsilon > 0$): The negative slope term $V'$ is amplified by $(1+\epsilon)$. To maintain the balance $\lambda_1 = 0$, the equilibrium must shift to a lower density.
\end{itemize}
Mathematically, the relationship is derived as:
\begin{equation}
    (1+\epsilon)\rho_{\text{total}} = \rho_{\text{base}} \implies \rho_{\text{total}} = \frac{\rho_{\text{base}}}{1+\epsilon}
\end{equation}
Since $\epsilon > 0$, it follows strictly that $\rho_{\text{total}} < \rho_{\text{base}}$. Physically, this implies that the traffic stream becomes unstable at densities that would otherwise sustain stable free-flow conditions.

\textit{Part 3: Aggravated Shockwaves.}
The shockwave speed $w$ is governed by the Rankine-Hugoniot condition across the interface:
\begin{equation}
    w = \frac{Q_{\text{supply}} - Q_{\text{demand}}}{\rho_{\text{supply}} - \rho_{\text{demand}}}
\end{equation}
According to the Riemann problem solution for this system (specifically Types 4 and 9 in \cite{jin2010kinematic}), the flow through the weaving section is limited by the downstream supply function:
\begin{equation}
    Q_{\text{supply}} = \frac{1}{1+\epsilon} Q_{\max}^{\text{base}}
\end{equation}
Consider a fixed upstream demand $Q_{\text{demand}}$ that creates congestion. Since $\epsilon > 0$, the supply is strictly reduced ($Q_{\text{supply}} < Q_{\max}^{\text{base}}$), making the flux difference $\Delta Q$ more negative than in the baseline case. Consequently, the magnitude of the backward wave speed strictly increases:
\begin{equation}
    |w_{\text{total}}| = \frac{|\Delta Q_{\text{total}}|}{\Delta \rho} > \frac{|\Delta Q_{\text{base}}|}{\Delta \rho} = |w_{\text{base}}|
\end{equation}
This proves that the queue propagates upstream at a higher velocity due to the reduced discharge capacity caused by game-theoretic friction.
\end{proof}

\noindent \textit{Remark:} This theorem provides the mathematical justification for the phantom jam. Collectively, the three parts of the proof demonstrate that the behavioral gradient $\frac{\partial \epsilon}{\partial x}$ functions as a virtual bottleneck that is dynamically more severe than a static geometric restriction. It actively pumps density into the system via the source term $\mathcal{P}$ (Part 1), lowers the system's immunity to breakdown by reducing $\rho_{\text{crit}}$ (Part 2), and ensures that once congestion forms, the resulting queue propagates upstream more rapidly than in baseline traffic (Part 3). This confirms that instability in mixed-autonomy traffic is a deterministic structural consequence of the friction gradient and not just a stochastic nuisance.

\section{Numerical Validation and Simulation}
\label{sec:simulation}

We empirically validate the theorems presented in Section \ref{sec:math_analysis} by implementing the coupled micro-macro model in a controlled numerical environment. This experimental design allows us to isolate the specific mechanisms driving flow degradation, free from the external noise associated with site-specific calibration. Our primary objective is to verify the macroscopic efficiency degradation identified in Theorem \ref{thm:efficiency} (capacity reduction) and the destabilization dynamics identified in Theorem \ref{thm:instability} (perturbation generation, premature congestion, and aggravated shockwaves). These simulations serve to demonstrate that the observed capacity loss and flow instability are necessary mathematical consequences of the game-theoretic interactions defined in Section \ref{sec:game_theory}.

\subsection{Experimental Design}

\subsubsection{Simulation Environment}
We implement a discrete numerical solver for the macroscopic conservation law derived in Section \ref{sec:macroscopic_model} (i.e., Jin's framework \cite{jin2010kinematic}). The traffic flow evolution is governed by the inhomogeneous kinematic wave equation:
\begin{equation}
    \frac{\partial \rho}{\partial t} + \frac{\partial}{\partial x} \left( \rho V((1+\epsilon(x, \rho))\rho) \right) = 0
\end{equation}
To solve this partial differential equation (PDE), we utilize a first-order Godunov scheme, which is well-suited for capturing the shockwaves and discontinuities inherent in traffic breakdown. The road segment is discretized into cells of length $\Delta x = 50$ m over a total length of $L_{road} = 2$ km. The simulation time step $\Delta t$ is selected to satisfy the Courant-Friedrichs-Lewy (CFL) condition ($v_f \Delta t \le \Delta x$) to ensure numerical stability.

\begin{figure}[h!]
    \centering
    \includegraphics[width=\linewidth]{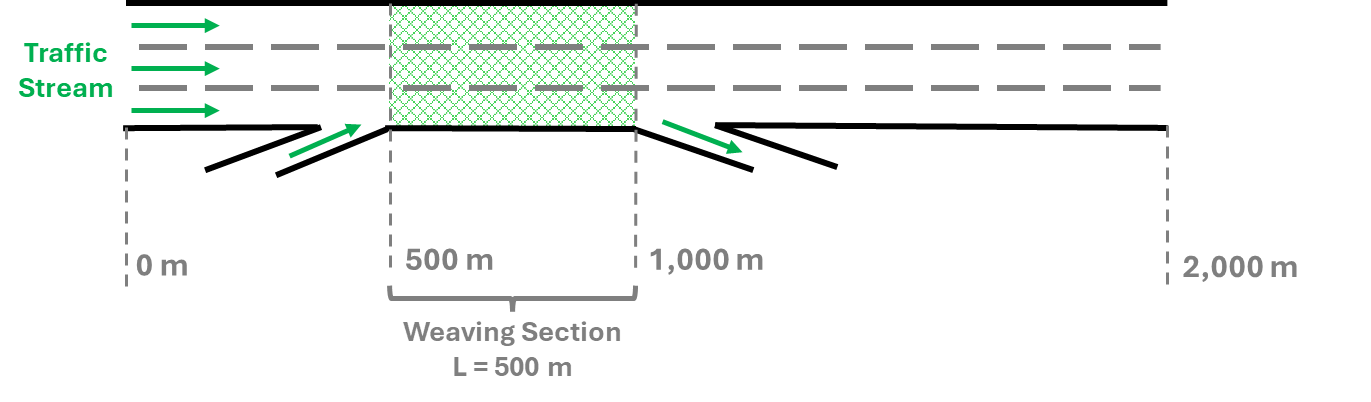}
    \caption{Schematic of the simulation environment. The weaving section, where lane-changing interactions occur ($\epsilon > 0$), is located between 500 m and 1000 m along the 2 km road segment.}
    \label{fig:weaving_setup}
\end{figure}

The road geometry, illustrated in Figure \ref{fig:weaving_setup}, consists of a simplified weaving section located between $x=500$ m and $x=1000$ m. Outside this region, the lane-changing intensity is zero ($\epsilon = 0$). Inside this region, $\epsilon$ is determined dynamically by the micro-macro bridge proposed in the previous section.

\subsubsection{Implementation of the Micro-Macro Bridge}
A critical feature of this simulation is that the lane-changing intensity parameter $\epsilon$ is not a fixed exogenous input. Instead, it is calculated endogenously at every time step $t$ for every spatial cell $j$ based on the local traffic density. To achieve this, the road length $L_{road}$ is discretized into $N$ homogeneous segments (i.e., cells) of length $\Delta x$, and the simulation advances in discrete time steps $\Delta t$.

We implement the analytical bridge function derived in Eq. \eqref{eq:epsilon_bridge} directly into the numerical flux loop. At each time step $t$, the solver performs the following updates for every cell $j$ located within the weaving section:
\begin{enumerate}
    \item Reads the local density $\rho_j^t$ currently stored in the grid cell.
    \item Calculates the exploitative cut-in probability $P_{\text{exploit}}(\rho_j^t)$ based on the game-theoretic threshold policy defined in Proposition \ref{prop:exploit_prob}.
    \item Updates the local friction parameter $\epsilon_j$ by combining the background friction from regular lane changes with the additional friction from strategic conflicts:
    \begin{equation}
        \epsilon_j(\rho_j^t) = \epsilon_{reg}(\rho_j^t) + \alpha \cdot P_{\text{exploit}}(\rho_j^t) \cdot (m_H m_C)
    \end{equation}
    where $\alpha$ is a scaling factor representing the specific impact of exploitative maneuvers on the traffic stream.
    \item Computes the numerical fluxes at the cell boundaries using the modified FD $Q(\rho) = \rho V((1+\epsilon_j)\rho)$ and updates the cell density according to the conservation law:
    \begin{equation}
        \rho_j^{t+1} = \rho_j^t - \frac{\Delta t}{\Delta x} \left( \Phi_{j+1/2} - \Phi_{j-1/2} \right)
    \end{equation}
\end{enumerate}
This coupling ensures that the macroscopic capacity degradation observed in the results is a direct mathematical consequence of the microscopic strategic interactions, without requiring the computational overhead of agent-based simulation.

\subsubsection{Behavioral Scenarios}
To assess the sensitivity of traffic flow to game-theoretic variables, we define two primary behavioral profiles:
\begin{itemize}
    \item Baseline Scenario (non-strategic): Represents a traditional traffic stream where lane changes are discretionary or cooperative. The critical gap for lane changing is fixed at the standard safety threshold ($g_{crit} = g_{crit}^{reg}$), effectively setting the exploitative probability ($P_{\text{exploit}}$) to zero.
    \item Game-Theoretic Scenario (strategic): Represents a mixed-autonomy traffic stream characterized by strategic exploitation. Consistent with the equilibrium derived in Section \ref{sec:game_theory}, human drivers identify CAVs as risk-averse agents and adopt an exploitation strategy. We model this behavior by setting the critical gap for cutting in front of a CAV ($g_{crit}^{cut}$) to be significantly lower than the regular safety gap ($g_{crit}^{reg}$). We vary the CAV market penetration rate ($m_C$) to observe how the frequency of these aggressive interactions, governed by the term $P_{\text{exploit}} \cdot m_H m_C$, impacts flow stability.
\end{itemize}
The specific numerical values for the road geometry, kinematic wave model, and game-theoretic interaction thresholds used in these experiments are summarized in Table \ref{tab:sim_params}.

\begin{table}[h!]
\centering
\caption{Simulation Parameters}
\label{tab:sim_params}
\begin{tabular}{|l|l|c|l|}
\hline
\textbf{Category} & \textbf{Parameter} & \textbf{Symbol} & \textbf{Value} \\
\hline
\textit{Road and Grid} & Road Length & $L_{road}$ & 2000 m \\
 & Cell Size & $\Delta x$ & 50 m \\
 & Time Step & $\Delta t$ & 1.0 s \\
 & Weaving Section & $x_{weave}$ & 500 -- 1000 m \\
\hline
\textit{Traffic Physics} & Free Flow Speed & $v_f$ & 30 m/s (108 km/h) \\
 & Jam Density & $\rho_j$ & 0.12 veh/m (120 veh/km) \\
 & Wave Speed & $C_j$ & 6.0 m/s \\
 & Vehicle Length & $L_{veh}$ & 5.0 m \\
\hline
\textit{Game and Interaction} & Regular Lane Change Impact Duration & $t_{LC}^{reg}$ & 2.0 s \\
 & Aggressive Lane Change Impact Duration & $t_{LC}^{cut}$ & 20.0 s \\
 & Regular Critical Gap & $g_{crit}^{reg}$ & 30.0 m \\
 & Aggressive Critical Gap & $g_{crit}^{cut}$ & 10.0 m \\
\hline
\end{tabular}
\end{table}

\subsection{Simulation Results}
Following the experimental design outlined above, we present the numerical findings in four stages. First, we validate the microscopic behavioral assumptions (Section \ref{sec:micro_validation}) to confirm that the dual critical gap parameters used in the strategic cut-in scenario are consistent with rational utility maximization. Second, we verify the physical mechanism of friction-induced compression (Section \ref{sec:density_nucleation}). Third, we quantify the impact on macroscopic capacity and the FD (Section \ref{sec:capacity_drop}). Finally, we analyze the spatiotemporal instability and shockwave dynamics (Section \ref{sec:shockwaves}).

\subsubsection{Validation of Microscopic Decision Boundaries}
\label{sec:micro_validation}

Before investigating the macroscopic flow dynamics, we first validated the structural assumptions of the micro-macro bridge (Section \ref{sec:bridge}) using the game-theoretic trajectory solver described in Appendix A. The objective was to verify whether the distinct critical gaps used in our macroscopic formulation ($g_{crit}^{CAV} \approx 10$ m vs. $g_{crit}^{reg} \approx 30$ m) emerge endogenously from rational utility maximization, rather than being arbitrary inputs.

We simulated a rational Ego agent attempting to merge into a target lane occupied by either a cooperative CAV or a non-cooperative HDV. The agent's decision logic was governed by the Cross-Entropy Method (CEM) optimizer with a utility function weighing speed ($w_v=4.0$) against safety ($w_s=2.5$) and effort ($w_a=0.1$).

Figure \ref{fig:micro_validation} presents the resulting decision boundaries. The simulation reveals a behavioral asymmetry driven by the opponent's type:

\begin{itemize}
    \item Exploitation of Cooperation (green circle marks): Against a CAV, the rational agent accepts gaps as small as 12 m. The solver correctly identifies that the cooperative opponent will brake to resolve the conflict, rendering the proximity penalty transient (lasting only 1--2 seconds). The agent effectively forced its way for the merge, taking advantage of the CAV's collision avoidance logic.
    
    \item Rejection of Conflict (red cross marks): Against an HDV, the same agent rejects these small gaps and requires a significantly larger opening ($\approx$ 23 m) to merge. Because the HDV maintains speed, any cut-in at a smaller gap results in a persistent proximity penalty that accumulates over the entire prediction horizon (10 s), outweighing the utility of the speed gain.
\end{itemize}

Although the critical gap for HDVs derived from the game-theoretic model ($\approx$ 23 m) is slightly more aggressive than the standard safety gap utilized in our macroscopic configuration (30 m), the structural finding remains robust: the acceptable gap against a CAV is roughly half that required against an HDV. This fundamental disparity confirms that mixed traffic cannot be accurately captured by a single interaction parameter. The identification of this ``exploitation window'', specifically the range $[12\text{ m}, 23\text{ m}]$ where a strategic agent will cut off a CAV but yield to an HDV. It provides the physical justification for the dual-threshold mechanism integrated into the macroscopic simulation presented in the following section.

\begin{figure}[th!]
    \centering
    \includegraphics[width=\linewidth]{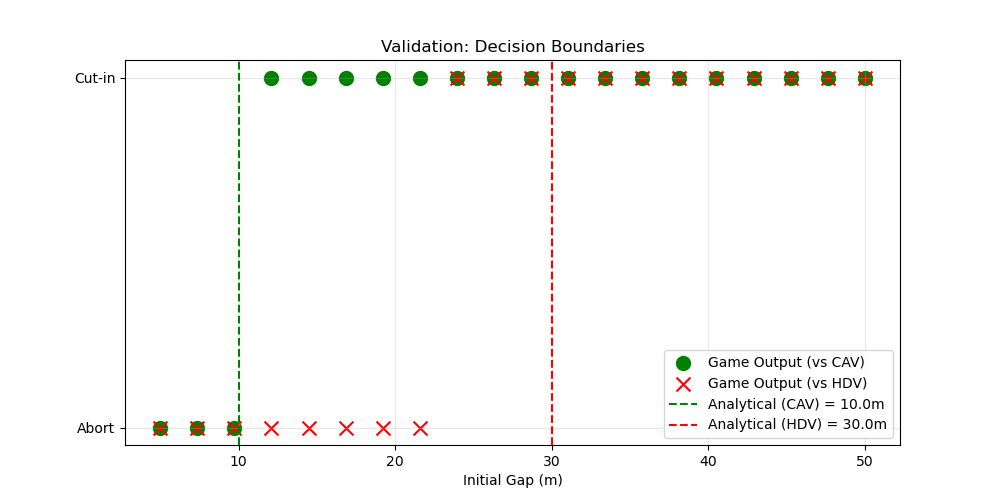} 
    \caption{Validation of HDV's Decision Boundaries in the Cut-in Game: Decision boundaries derived from the CEM game solver. The agent exploits the cooperative nature of CAVs (accepting 12 m gaps) while respecting the stubbornness of HDVs (requiring 23 m gaps).}
    \label{fig:micro_validation}
\end{figure}

\subsubsection{Friction-Induced Compression}
\label{sec:density_nucleation}

Having established the microscopic interaction parameters, we next isolate the macroscopic consequences of these interactions. Specifically, we seek to verify the existence of the friction-induced compression, which theoretically predicts that the game-theoretic source term triggers spontaneous density nucleation even in the absence of geometric bottlenecks.

To test this, we simulated a scenario with constant free-flow inflow ($\rho_{in} = 0.015$ veh/m) and no physical bottlenecks (i.e., constant capacity). In a standard LWR model, the density profile would remain flat and invariant over time. However, we activated the game-theoretic friction by having 50\% of CAV ($m_C=0.5$) within the weaving zone ($x \in [500, 1000]$ m).

Figure \ref{fig:friction_pump} illustrates the temporal evolution of the average density within the weaving segment. Despite the constant inflow (black dashed line), the internal density (red solid line) spontaneously rises and stabilizes at a higher equilibrium value ($\rho \approx 0.0177$ veh/m). This 18\% increase in density represents pure behavioral compression. Physically, the strategic cut-ins act as a distributed resistance field, forcing the traffic stream to compress to maintain mass conservation. This result empirically confirms that the game-theoretic interaction term functions as a ``behavioral bottleneck,'' inducing congestion solely through the heterogeneity of driver strategies.

\begin{figure}[th!]
    \centering
    \includegraphics[width=\linewidth]{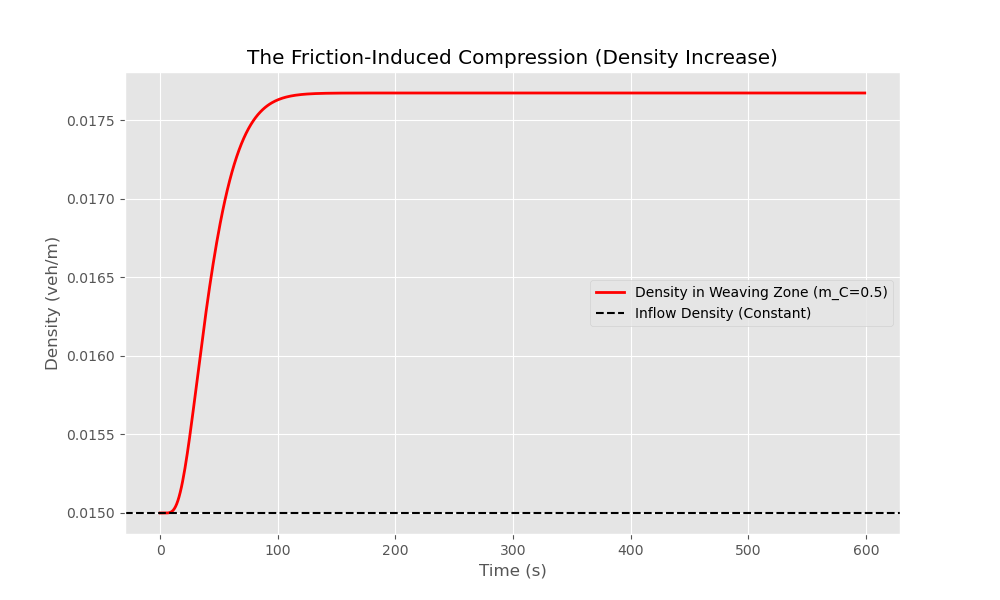}
    \caption{Friction-Induced Compression: Temporal evolution of traffic density within the weaving zone. Despite constant inflow conditions (black dashed line), the activation of game-theoretic friction ($m_C=0.5$) causes the local density to spontaneously rise and stabilize at a higher equilibrium (red solid line), confirming the existence of the behavioral bottleneck effect.}
    \label{fig:friction_pump}
\end{figure}

\subsubsection{Capacity Decrease}
\label{sec:capacity_drop}

While the previous section demonstrated that behavioral friction increases density, a more critical question for traffic operations is whether this friction reduces the maximum throughput (capacity) of the facility. To quantify this, we first reconstructed the equilibrium FD for the worst-case interaction scenario ($m_C=50\%$). Consistent with the formulation in \cite{jin2010kinematic}, we utilized the Del Castillo and Benitez \cite{del1995functional} functional form for the equilibrium speed-density relationship.

Figure \ref{fig:fd_shift} compares the resulting flow-density curves for the Baseline ($m_C=0\%$, blue) and the Strategic Cut-In scenario ($m_C=50\%$, red). The results reveal a significant macroscopic capacity degradation driven by microscopic exploitation. The active exploitation of CAVs amplifies the effective density, causing a downward suppression of the FD. The capacity drops from approximately 1400 veh/h to 1300 veh/h ($\approx 7\%$ reduction).

\begin{figure}[th!]
    \centering
    \includegraphics[width=\linewidth]{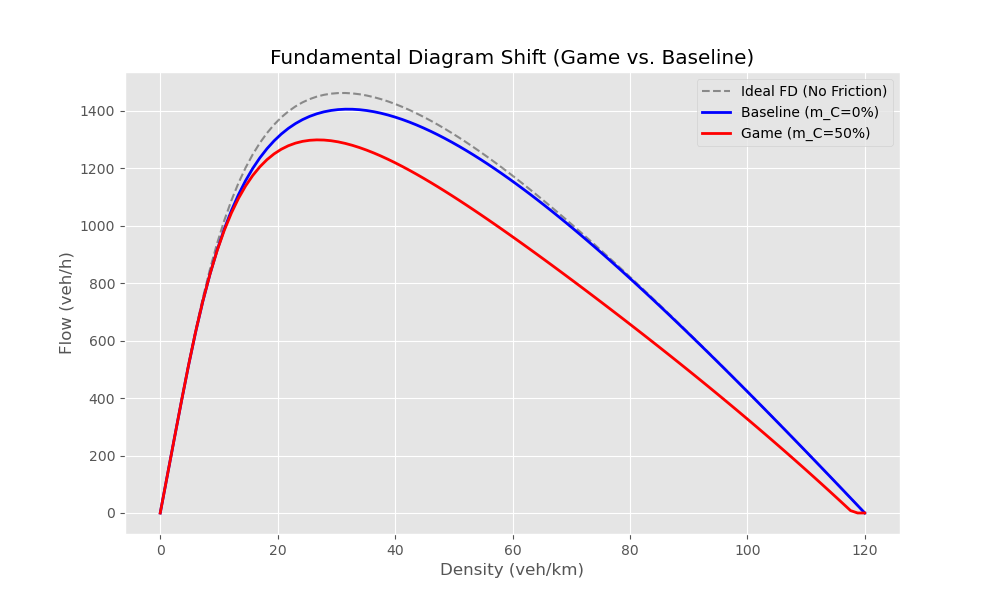}
    \caption{FD Shift: Comparison of equilibrium flow-density relationships. The active exploitation of CAVs ($m_C=50\%$, red line) suppresses the flow curve compared to the Baseline ($m_C=0\%$, blue line), resulting in capacity degradation.}
    \label{fig:fd_shift}
\end{figure}

To investigate how this degradation varies with technology adoption, we performed a sensitivity analysis by sweeping the CAV penetration rate $m_C$ from 0\% to 100\%. Figure \ref{fig:capacity_sweep} presents the resulting capacity curve, revealing a strictly convex relationship between penetration rate and system capacity. The degradation follows an inverted parabolic trend, closely mirroring the interaction frequency term $m_C(1-m_C)$ in our friction model. In the initial phase ($0\% < m_C < 45\%$), the introduction of CAVs creates a growing number of exploitable targets. Human drivers encounter frequent opportunities to cut in, thereby increasing the aggregate friction until the system reaches its minimum efficiency at approximately $m_C \approx 45\%$, where the frequency of HDV-CAV interactions is maximized. Beyond this inflection point ($m_C > 45\%$), a recovery phase emerges. As CAVs become dominant, the population of aggressive HDVs ($m_H$) dwindles; although targets remain plentiful, there are fewer aggressors to exploit them. Consequently, the capacity gradually recovers, ultimately exceeding the baseline only when $m_C > 85\%$.

\begin{figure}[th!]
    \centering
    \includegraphics[width=\linewidth]{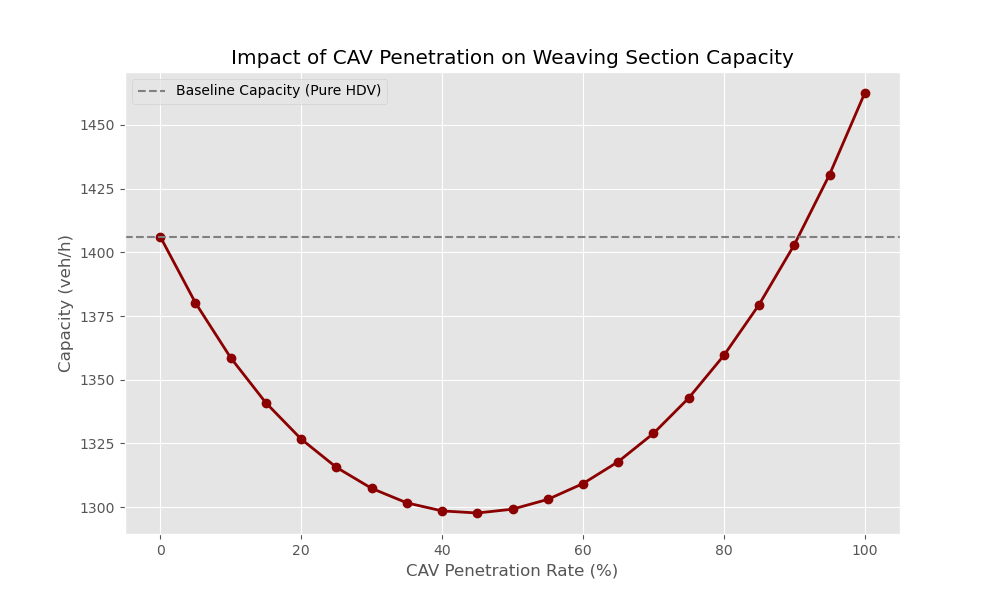}
    \caption{Convex Capacity Curve: Sensitivity of macroscopic capacity to CAV penetration rate. The system exhibits a non-linear, convex response, where capacity initially degrades due to the high interaction frequency between aggressive HDVs and passive CAVs. The minimum capacity (``valley of efficiency'') occurs near 45\% penetration, and performance only exceeds the pure-HDV baseline at very high adoption rates ($>85\%$).}
    \label{fig:capacity_sweep}
\end{figure}

\subsubsection{Spatiotemporal Instability}
\label{sec:shockwaves}

Finally, we examine the dynamic impact of game-theoretic friction on the propagation of congestion. Static capacity drops are dangerous not only because they limit throughput, but because they alter the wave speeds governing traffic breakdown. To visualize this, we simulated a high-demand scenario ($\rho_{in} > \rho_{crit}$) where the weaving section acts as an active bottleneck, triggering a backward-moving queue.

Figure \ref{fig:shockwaves} presents the spatiotemporal density evolution for the Baseline (top) and Strategic Cut-In (bottom) scenarios. Both cases exhibit the formation of a shockwave upstream of the weaving zone ($x < 500$ m), separating the free-flow state (dark blue) from the congested state (cyan). However, the trajectories of these shockwaves differ fundamentally.

In the Game-Theoretic scenario, the reduced discharge capacity leads to a larger flux deficit ($Q_{in} - Q_{cap}$). According to the Rankine-Hugoniot condition ($w = \Delta Q / \Delta \rho$), this deficit dictates the speed of the backward shockwave. As seen in the figure, the shockwave front in the Strategic scenario is significantly flatter, indicating a higher upstream propagation speed. The queue reaches the upstream boundary ($x=0$ m) at approximately $t=400$ s, whereas in the Baseline scenario, the queue propagates more slowly, reaching the boundary nearly 30\% later ($t \approx 600$ s).

This validates the theoretical prediction of instability amplification: the game-theoretic friction does not only constrain the capacity, it also induces an shockwave that propagates faster and affects a larger portion of the upstream network. This rapid back-propagation implies that mixed-autonomy congestion will be inherently more difficult to dissipate than traditional recurring congestion.

\begin{figure}[th!]
    \centering
    \includegraphics[width=\linewidth]{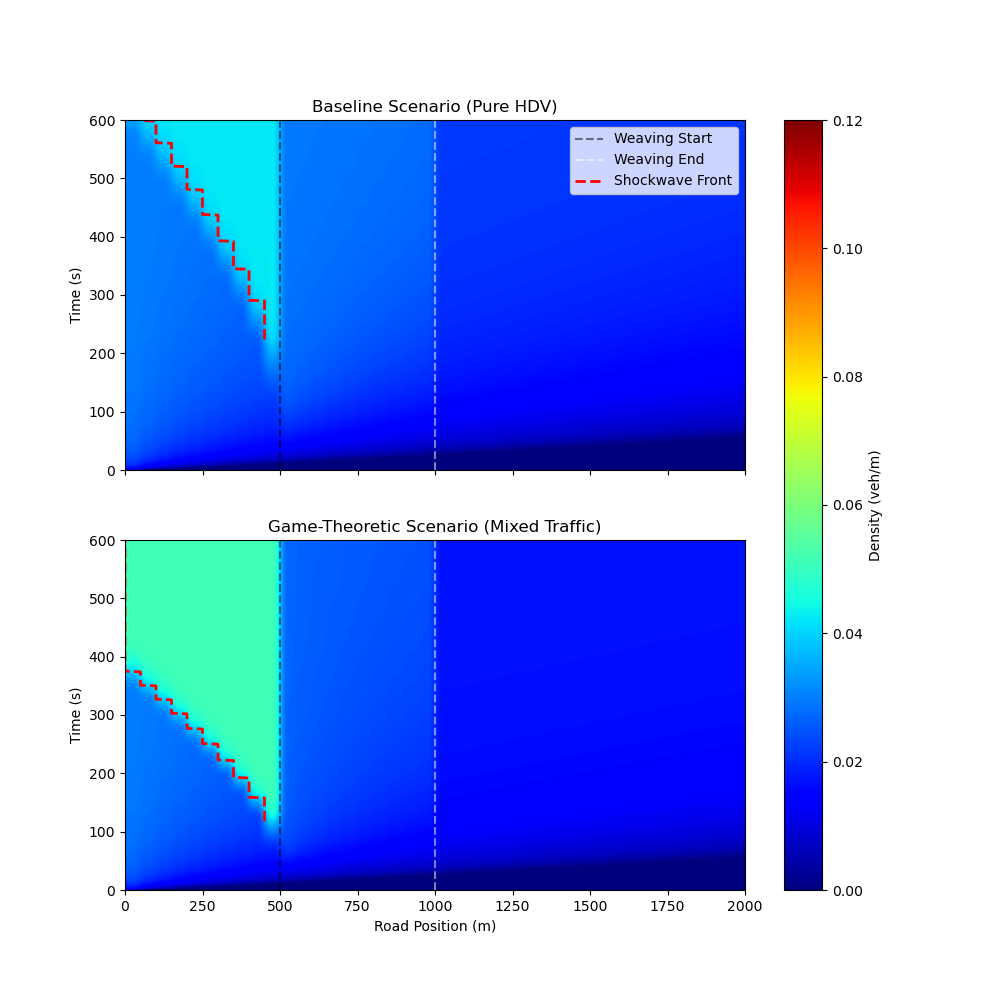}
    \caption{Spatiotemporal Instability Under High Demand: Evolution of traffic density in the $x$-$t$ plane with a constant inflow $\rho_{in} = 0.03$ veh/m. While the Baseline (top) exhibits a transition to a boundedly stable congested state, the Game-Theoretic scenario (bottom) triggers a severe capacity drop, resulting in a high-density jam (indicated by the green region, $\rho \approx 0.05$ veh/m). Consequently, the backward recovery shockwave in the Game-Theoretic model propagates significantly faster, reaching the upstream boundary ($x=0$) approximately 200 seconds earlier than the Baseline, demonstrating the impact of negotiation friction on bottleneck capacity.}
    \label{fig:shockwaves}
\end{figure}

\subsubsection{Parameter Sensitivity and Robustness}
\label{sec:sensitivity}

A key parameter in our macroscopic formulation is the lane change impact duration ($t_{LC}^{\text{cut}}$). In Jin's original kinematic wave formulation \cite{jin2010kinematic}, the corresponding parameter $\tau$ is defined strictly as the physical maneuver duration: the time a vehicle occupies space in both lanes (typically 2-4 s). However, in the context of mixed-autonomy conflicts, aggressive cut-ins induce strong braking reactions that persist long after the physical maneuver is complete. Consequently, we extend the definition of $t_{LC}^{\text{cut}}$ to represent the effective impact duration, capturing both the maneuver itself and the subsequent relaxation time required for the follower to recover their desired headway.

In the results above, we assumed a representative value of $t_{LC}^{\text{cut}} = 20$ s, corresponding to the time required for such a disturbance to dissipate in heavy traffic. To ensure our findings are not artifacts of this specific parameter choice, we performed a robustness check by varying $t_{LC}^{\text{cut}}$ from 5 s to 25 s. Figure \ref{fig:robustness} illustrates the sensitivity of the capacity degradation to this parameter. The relationship is approximately linear: while the magnitude of the capacity drop varies (ranging from $\approx 1\%$ at 5 s to $\approx 9\%$ at 25 s), the phenomenon itself is robust. Even with a conservative estimate of 10 s, a statistically significant capacity reduction persists. This confirms that the behavioral bottleneck is a structural consequence of the game-theoretic interaction, not merely a result of parameter tuning.

\begin{figure}[h!]
    \centering
    \includegraphics[width=\linewidth]{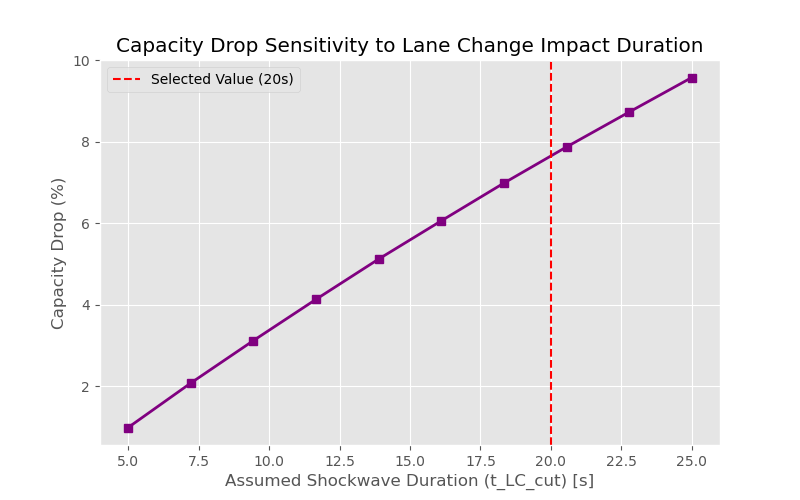}
    \caption{Robustness Check: Sensitivity of capacity degradation to the assumed lane-change impact duration ($t_{LC}^{\text{cut}}$). While the magnitude of the drop scales linearly with the persistence of the friction, the degradation remains positive and significant across the entire physically plausible range.}
    \label{fig:robustness}
\end{figure}

\section{Discussion}
\label{sec:discussion}

The simulation results presented in Section \ref{sec:simulation} confirm that mixed-autonomy traffic is susceptible to a unique form of instability driven not by vehicle dynamics, but by game-theoretic exploitation. The core contribution of this work is the identification of friction-induced compression, a phenomenon where congestion arises purely from behavioral friction rather than geometric bottlenecks. As verified in our microscopic validation (Section \ref{sec:micro_validation}), rational agents identify an ``exploitation window'' (gaps between 12 m and 23 m) where they can force a merge against a risk-averse CAV but would be rejected by a stubborn HDV. Macroscopically, this manifests as a behavioral bottleneck, where the game-theoretic interaction term acts as a distributed resistance field. Physically, every successful exploitative cut-in forces the following CAV to brake, creating a localized void that is immediately filled by upstream pressure, resulting in a permanent increase in local density. This validates Theorem \ref{thm:instability} and explains why adding ``efficient'' but passive CAVs can paradoxically nucleate congestion in weaving zones.

This behavioral friction does not affect the system linearly; rather, our sensitivity analysis reveals a convex relationship where system performance is minimized at intermediate penetration rates ($m_C \approx 45\%$). This aligns with the theoretical interaction term $m_C(1-m_C)$, confirming that instability is maximized when the frequency of heterogeneous interactions peaks. This finding challenges the prevailing optimism that low-level CAV deployment will yield immediate flow benefits. Instead, it suggests that the transition period will be characterized by a regime of instability. As long as a significant population of aggressive human drivers remains, they will treat the growing population of CAVs as exploitable infrastructure. System performance only recovers when CAVs achieve saturation ($>85\%$), sufficiently diluting the aggressive population to suppress the friction term.

Furthermore, the impact of this friction extends beyond static capacity degradation to dynamic spatiotemporal instability. The formation of faster-propagating backward shockwaves, as observed in Section \ref{sec:shockwaves}, implies that game-theoretic congestion is kinematically more robust than recurrent congestion. Because the jam propagates more rapidly upstream (due to the larger flux deficit defined by the Rankine-Hugoniot condition), it affects a larger portion of the network in less time. This rapid back-propagation necessitates a fundamental shift in CAV control logic from purely ``reactive safety'' to ``strategic defense''. Current algorithms designed for string stability are, in a game-theoretic sense, dominated strategies that encourage exploitation.

To mitigate these effects, future CAV control strategies must move beyond unconditional cooperation. Effective countermeasures may require the adoption of defensive game theory, where CAVs dynamically tighten gaps in high-friction zones to physically close the exploitation window ($< 12$ m). While our current model assumes a homogeneous population of rational HDVs, the robustness of these findings across varying impact durations (Section \ref{sec:sensitivity}) suggests that the behavioral bottleneck is a structural reality of mixed traffic. Ultimately, maximizing the social utility of autonomous transportation may require algorithms that occasionally reject individual cut-ins to preserve the aggregate flow, prioritizing system stability over local courtesy.

\section{Conclusion}
\label{sec:conclusion}

This study has established a rigorous analytical link between microscopic game-theoretic interactions and macroscopic traffic flow stability in mixed-autonomy environments. By modeling the HDV cut-in maneuver as a Stackelberg game against risk-averse CAVs, we identified a distinct ``exploitation window'' where human drivers leverage the defensive programming of automated systems to force merges at substandard gaps. We successfully bridged this behavioral asymmetry to the continuum scale by deriving a probabilistic friction term ($\epsilon$) that endogenously modifies the kinematic wave equation. This novel micro-macro framework moves beyond static heterogeneity parameters, allowing the macroscopic model to dynamically capture the capacity degradation driven specifically by the frequency and aggressiveness of strategic exploitations.

Our theoretical and numerical analyses reveal that this behavioral friction acts as a deterministic destabilizer of the traffic stream. We proved that the gradients of game-induced friction generate positive perturbation source terms, triggering phantom jams and premature congestion even in the absence of geometric bottlenecks. Most critically, we demonstrated a convex relationship between CAV market penetration and road capacity, identifying a ``valley of efficiency'' at intermediate penetration rates (approximately 45\%). In this regime, the system suffers maximum instability because the frequency of exploitable HDV-CAV pairings is maximized, confirming that the defensive nature of early-deployment CAVs may paradoxically degrade network throughput before sufficient saturation is reached to realize cooperative benefits.

Future research will extend this framework to closed-loop control strategies, exploring how CAVs can adapt their game-theoretic parameters to mitigate exploitation. A promising direction is the development of more ``socially-aware'' CAV policies that balance defensive safety with strategic assertiveness to close the exploitation window. Additionally, we aim to validate these findings using high-fidelity trajectory data from real-world mixed-traffic pilots.  By calibrating the game-theoretic utility functions against empirical cut-in distributions, we can further refine the friction model to support the design of traffic management strategies that specifically target behavioral bottlenecks in the transition era of automated driving.

\bibliography{refs}

\appendix
\section*{Appendix}
\subsection*{Solving the HDV Cut-in Control Problem}

The core idea is to transform the continuous optimization problem into a discrete one that can be solved with a numerical algorithm. This is done by discretizing the planning horizon and searching for the best sequence of control inputs.

The total planning horizon is denoted as \(T\), and we can discretize it by dividing it into \(N\) discrete steps, with \(\Delta t = T/N\). The HDV will plan a sequence of \(N\) actions, one for each step.

The decision variable for the optimization is a sequence of control inputs for the HDV over the planning horizon:
\begin{equation}
\mathbf{U}_\text{HDV} = [\mathbf{u}_\text{HDV}(t_0), \mathbf{u}_\text{HDV}(t_1), ..., \mathbf{u}_\text{HDV}(t_{N-1})]
\end{equation}
where \(\mathbf{u}_\text{HDV}(t_k) = [u_{x,k}, u_{y,k}]^T\) is a control input from the action space \(\mathcal{A}_\text{HDV}\).

For any given action sequence, \(\mathbf{U}_\text{HDV}\), the HDV predicts the resulting future states of the system by simulating the dynamics forward in time, step by step, for \(k = 0, ..., N-1\). The Ego HDV state is updated by:
\begin{equation}
\mathbf{x}_\text{HDV} (t_{k+1}) = F(\mathbf{x}_\text{HDV}(t_k), \mathbf{u}_\text{HDV}(t_k))
\end{equation}
and the Target Vehicle (TV) state is updated using the type-dependent policy \(\pi_\text{TV}\):
\begin{equation}
\mathbf{u}_\text{TV} (t_{k}) = \pi_\text{TV}(\mathbf{s}(t_k))
\end{equation}
\begin{equation}
\mathbf{x}_\text{TV} (t_{k+1}) = F(\mathbf{x}_\text{TV}(t_k), \mathbf{u}_\text{TV}(t_k))
\end{equation}
where \(F(\cdot)\) is the state transition function based on vehicle dynamics. This process generates a sequence of system states: \([\mathbf{s}(t_0), \mathbf{s}(t_1), ..., \mathbf{s}(t_N)]\).

For the predicted trajectory, we can calculate the total discounted utility for the HDV:
\begin{equation}
U_\text{HDV}(\mathbf{U}_\text{HDV}) = \sum_{k=0}^{N-1}{ \gamma^k \cdot r_\text{HDV} (\mathbf{s}(t_k), \mathbf{u}_\text{HDV}(t_k))}
\end{equation}
The optimization problem is now to find the action sequence, \(\mathbf{U}_\text{HDV}^*\), that maximizes the total utility function:
\begin{equation}
\mathbf{U}_\text{HDV}^* = \arg\max_{\mathbf{U}_\text{HDV}} U_\text{HDV}(\mathbf{U}_\text{HDV})
\end{equation}
Since the action space is continuous, this search requires a numerical optimization algorithm, and we choose the Cross-Entropy Method (CEM), a derivative-free technique. The application of CEM on the HDV trajectory optimization is shown in Algorithm \ref{alg:cem}. The result of the optimization is the entire optimal sequence of actions, \(\mathbf{U}_\text{HDV}^* = [\mathbf{u}_\text{HDV}^*(t_0), \mathbf{u}_\text{HDV}^*(t_1), ..., \mathbf{u}_\text{HDV}^*(t_N-1)]\). The HDV only executes the first action in this sequence, \(\mathbf{u}_\text{HDV}^*(t_0)\), and at the next time step, the entire process is repeated with new perceived information. The final optimal control input is:
\begin{equation}
( \mathbf{u}_x^*(t), \mathbf{u}_y^*(t) ) = \mathbf{u}_\text{HDV}^*(t_0)
\end{equation}

\begin{algorithm}
\caption{CEM for HDV Trajectory Optimization}
\label{alg:cem}
\begin{algorithmic}[1]
\State \textbf{Input:} Initial state $\mathbf{s}(t_0)$, planning horizon $N$, samples $K$, elite samples $M$, iterations $I$.
\State \textbf{Initialize:} Mean action sequence $\boldsymbol{\mu} \leftarrow \mathbf{0}$; Standard deviation sequence $\boldsymbol{\sigma} \leftarrow \mathbf{1}$.
\For{$i = 1$ to $I$}
    \State \textit{// Sample K action sequences}
    \State Sample $\mathbf{U}_j \sim \mathcal{N}(\boldsymbol{\mu}, \text{diag}(\boldsymbol{\sigma}^2))$ for $j = 1, ..., K$.
    \State
    \State \textit{// Evaluate each sample's utility via simulation rollout}
    \For{$j = 1$ to $K$}
        \State Calculate total utility $U_j \leftarrow U_{\text{HDV}}(\mathbf{s}(t_0), \mathbf{U}_j)$.
    \EndFor
    \State
    \State \textit{// Select the M elite samples}
    \State Let $\mathcal{E}$ be the set of the top $M$ action sequences $\{\mathbf{U}_j\}$ with the highest utilities.
    \State
    \State \textit{// Update the distribution to fit the elite set}
    \State $\boldsymbol{\mu} \leftarrow \frac{1}{M} \sum_{\mathbf{U} \in \mathcal{E}} \mathbf{U}$
    \State $\boldsymbol{\sigma} \leftarrow \sqrt{\frac{1}{M} \sum_{\mathbf{U} \in \mathcal{E}} (\mathbf{U} - \boldsymbol{\mu})^2}$
\EndFor
\State \textbf{Return:} The first action from the final mean sequence, $\mathbf{u}^*(t_0) \leftarrow \boldsymbol{\mu}[0]$.
\end{algorithmic}
\end{algorithm}

\end{document}